\newtheorem{theorem}{Theorem}
\theoremstyle{definition}
\newtheorem{remark}{Remark}
\newcommand{\indic}{{\bm 1}}
\def\beqn{\begin{eqnarray*}}
\def\eeqn{\end{eqnarray*}}
\def\beq{\begin{eqnarray}}
\def\eeq{\end{eqnarray}}
\def\bm#1{\mbox{\boldmath{$#1$}}}
\begin{document}

\title{\Large\bfseries
Estimation of Local Degree Distributions via Local Weighted Averaging and Monte Carlo Cross-Validation
}
\author{\bfseries Paulo Serra\thanks{
This research partially took place while this author was a postdoctoral researcher at the Korteweg-de Vries Institute for Mathematics, of the University of Amsterdam, in Amsterdam, the Netherlands.} $^{,\dagger}$\,
and Michel Mandjes\thanks{The research for this paper is partly funded by the NWO Gravitation Programme N{\sc etworks}, Grant Number 024.002.003 (Serra, Mandjes), and an NWO Top Grant, Grant Number 613.001.352 (Mandjes).}\\ \vspace{-0.75em} {\rm Eindhoven University of Technology and University of Amsterdam}}
\date{\today}

\maketitle

\begin{abstract}
  Owing to their capability of  summarising interactions between elements of a system, networks have become a common type of data in many fields.
  As networks can be inhomogeneous, in that different regions of the network may exhibit different topologies,
  an important topic concerns their local properties.
  This paper focuses on the estimation of the local degree distribution of a vertex in an inhomogeneous network.
  The contributions are twofold: we propose an estimator based on local weighted averaging, and we set up a Monte Carlo cross-validation procedure to pick the parameters of this estimator.
  Under a specific modelling assumption we derive an oracle inequality that shows how the model parameters affect the precision of the estimator.
  We illustrate our method by several numerical experiments, on both real and synthetic data, showing in particular that the approach considerably improves upon the natural, empirical estimator.
\end{abstract}

{{\bf Keywords:}
local degree distribution,
local weighted averaging,
Monte Carlo cross-validation,
oracle inequality,
random connection model,
wireless ad-hoc networks.
}

\section[Introduction]{Introduction}\label{sec:introduction}

Networks are capable of summarising the interactions between  elements of a system. This explains why they have become    a common type of data in a variety of fields, such as statistics, computer science, the social sciences, physics, biology/genomics, epidemiology, transport and communications, and neuroscience; see e.g.\ \cite{kolaczyk2014statistical}.
An important feature of many networks encountered in these fields is that they are {\it heterogeneous}, in the sense that different regions of the network may have rather different topological properties~\cite{barabasi1999emergence}.
For instance in networks with an underlying community structure  heterogeneity
arises, with the heterogeneity becoming more pronounced in case  the communities are intrinsically different~\cite{girvan2002community}.
Other examples of heterogeneous networks include protein-protein interaction networks, gene regulatory networks, gas/electricity/water networks, brain networks, wireless networks, inter-bank networks, road networks, and networks used in recommender systems and security monitoring~\cite{barabasi2016network}.

Studies of heterogeneous networks typically focus on features like the degree distribution, the clustering coefficient, and the average path length, as these greatly determine the topology of the network~\cite{van2009random}.
It is clear, however, that in many situations  one is interested in a {\it local version} of such features (i.e., the features mentioned above, but then  {in relation to a specific vertex}, or a specific region of the network), rather than their global counterparts.
The main topic of this paper  is the estimation of the {\it local degree distribution}, i.e., the degree distribution of a fixed vertex of interest (henceforth called the {\it origin}).

It is important to note that
there are also computational reasons to focus on local properties of networks.
Most online social networks, for example, are nowadays made up of millions of vertices and tens of millions of edges, making sampling itself challenging~\cite{Yang:2012aa}.
Local features, however, are estimated using only local information and are therefore more tractable, and amenable to parallel computing.

\vspace{3mm}

In this paper we follow a model-based approach to the problem of estimating local degree distributions.
Random graphs are popular choices when it comes to modelling complex networks; cf.~\cite{goldenberg2010survey,newman2010networks,van2009random,wasserman1994social}.
Here, we model our network according to a {\it random connection model} (RCM), a model from continuum percolation; cf.~\cite{penrose1991continuum} and (or an overview on the subject) \cite{meester1996continuum}. We also refer to~\cite{serra2017dimension} for an application of such models to the estimation of the intrinsic dimension of high-dimensional datasets.
Associated with each vertex is a (latent) feature, each sampled independently in some (potentially infinite-dimensional) metric space;
given these features, edges are placed independently between each possible pair of vertices with a probability that is a non-increasing function of the distance between the features corresponding to the vertices.
This model is quite flexible, in the sense that it introduces  heterogeneity in a natural manner. At the same time, it captures several random graph models as particular cases.

The  algorithm we propose  is fairly simple.
Since we only observe the degree of the origin once, an empirical estimate of the degree distribution (or, equivalently, of the connection probability of the origin) is trivial, namely the degree of the origin divided by the number of other vertices. The question, however, is whether we can do better.
Now observe that, since the connection probability is a smooth function of the location of the vertex on the feature space, it is reasonable to expect that the connection probability of neighbours of the origin to be, with high probability, similar to the connection probability of the origin itself. The idea is  therefore to borrow information from the degrees of vertices in a neighbourhood of the origin.
The resulting estimator depends on two parameters:
a set of neighbours to borrow information from, and
a set of weights to balance the contributions of these neighbours to the estimate.
Following this procedure will reduce the variance of the estimator, but this comes at the cost of introducing a bias into the estimate as a consequence of the (potentially uninformative) degrees of the neighbours.
In this sense, our estimator falls in the context of locally weighted linear estimators~\cite{friedman2001elements,james2013introduction,wasserman2006all}.

A computational advantage of the proposed algorithm, is that the estimate can be evaluated in a recursive manner,
thus allowing for efficient implementation.
The first estimate is the trivial estimate based only on information about the origin;
each subsequent estimate includes information from an extra neighbour of previously included vertices.
This entails that the algorithm can be viewed as a stochastic approximation algorithm, where we see the introduction of a new neighbour into the estimator as a correction to the previous estimate; cf.~\cite{belitser2013online,kiefer1952stochastic,kushner2003stochastic,robbins1951stochastic}.

It is clear that a wide variety of possible weight sequences can be used. Our first result concerns an oracle inequality that bounds the error of our estimator as a function of (i)~the size of the neighbourhood (which can effectively be seen as a bandwidth parameter), (ii)~the weights for the degrees of the neighbouring vertices, and (iii)~the underlying smoothness of the function that determines the mean degrees of vertices in a neighbourhood of the origin; cf.~\cite{candes2006modern} for an overview on oracle inequalities.
Based on this inequality we can determine the optimal families of weights to be used in the estimator. %(as a function of the number of neighbours).

Our second result is a method for selecting the size of the neighbourhood, or, directly, the neighbours from which we borrow information.
This result is based on an estimate of the risk which is obtained via Monte Carlo cross-validation (MCCV)~\cite{xu2001monte}.
We show that this criterion provides an estimate of the risk of the estimator, as a function of the weights and of the size of the neighbourhood.
The estimate is inherently biased, but the bias is independent of the size of the neighbourhood (and of the weights that we consider).
The number of neighbours can then be picked as the minimiser of this criterion.
In this sense, the MCCV procedure also provides a criterion for early stopping of our recursive estimator; cf.~\cite{Raskutti:2014:ESN:2627435.2627446,MR2327601}.
In this way we have an estimator that adapts to unknown features of the underlying network; cf.~\cite{tsybakov2009introduction} for an overview on the principles of adaptive estimation.

We have performed a set of illustrative numerical simulations, focusing on estimating the connection probability of the origin.
These in the first place show how our approach considerably improves upon the empirical estimator  by  choosing the number of neighbours in an appropriate way (for various choices of the set of weights).
In the second place, we demonstrate that this number of neighbours tends to grow slowly with the number of vertices in the network, as long as the average degree of the origin does not grow too quickly.
We show how the minimiser of the MCCV criterion quickly stabilises even after a small number of replications, and how choosing the number of neighbours according to this criterion leads to an improvement over the empirical estimator.

In addition, we  apply the estimator to real data and simulated data.
We apply our method to estimate the probability that a specific vertex in the network establishes edges with other vertices.
We show that an appropriate choice of the weights, combined with our MCCV procedure, leads to considerable improvement over the empirical estimator.
We do this also under different choices for the distribution of the features that determine the topology of the network.
Furthermore, we consider an application to wireless ad-hoc networks~\cite{haas2002wireless}, where we investigate the minimal number of vertices in the network so as to ensure that an isolated node can access some other node in the network with a given probability.

This paper is structured as follows.
We introduce a model for heterogeneous networks in Section~\ref{sec:model}.
In Section~\ref{sec:estimator} we define our estimator.
We present in Section~\ref{sec:oracle_bound} an oracle bound on the error of the estimator as a function of its parameters.
In Section~\ref{sec:choice_of_weights_and_k} we outline a Monte Carlo cross-validation procedure to pick the parameters of the estimator in a data-driven way.
Although we illustrate the use of our estimator throughout the paper\footnote{The graphs in Figures~\ref{fig:SBM_example_1_graph}, \ref{fig:world_cities_graph}, and~\ref{fig:SIM_example_graph}, as well as the implementation of our estimator make use of R's   \emph{igraph} package~\cite{csardi2006igraph}.},
in Section~\ref{sec:application} we apply our estimator to a real dataset, whereas Section~\ref{sec:simulations} contains numerical experiments with simulated datasets.
Concluding remarks can be found in
Section~\ref{sec:conclusions}.
A possible direction for future work is discussed in Section~\ref{sec:future}.
All proofs can be found in Appendix~\ref{apx:proofs}.

% \newpage
%%%%%%%%%%%%%%%%%%%%%%%%%%%%%%%%%%%%%%%%%%%%%%%%%%
\section[A Model for Heterogeneous Networks]{A Model for Heterogeneous Networks}\label{sec:model}

As mentioned in the introduction, we model our heterogeneous network by a random connection model (RCM);
in this section we provide a formal desription.
To this end, let $X_1, \dots, X_n$, be \emph{features} sampled i.i.d.\ from some distribution $F$ on $\mathbb{R}^d$.
In addition, consider a `probe' feature $X_0=x$, where $x\in\mathbb{R}^d$ is of our choosing and fixed.
Abbreviate $\bm X = (X_0,X_1, \dots, X_n)$.
%These have the interpretation of being features for the underlying graph.
Let $\rho:\mathbb{R}_{0,+}\to[0,1]$ be a non-negative \emph{connection function}, which we allow to depend on $n$.
This function will typically have bounded support, or it will be such that $\rho(x)$ converges relatively fast to $0$ as $x\to\infty$.

A random graph is sampled from the model in the following way.
We first sample features $\bm X$, and given those we sample a random probability matrix $\bm P = (P_{i, j})_{i, j=0, \dots, n}$, where
\begin{equation}\label{def:P_i_j}
P_{i, j}:=\rho\big(\|X_i-X_j\|\big), \quad i,j = 0,\dots, n,\, \:\mbox{with}\:\:i\neq j, \qquad P_{i,i} = 0.
\end{equation}
Given $\bm P$ we sample an adjacency matrix $\bm A = (A_{i, j})_{i,j=0,\dots,n}$, by first sampling i.i.d.\ $\epsilon_{i, j}\sim U[0,1)$, $i,j=0,\dots,n$, and setting
\begin{equation}\label{def:A_i_j}
A_{i, j} := \indic\{P_{i, j}>\epsilon_{i, j}\},  \quad i,j = 0,\dots, n\, \:\mbox{with}\:\:i\neq j,
\end{equation}
so that (almost surely) $A_{i,i} = 0$, for $i=0,\dots,n$, and (in self-evident notation)
\begin{equation}\label{eq:A_i_j_distribution}
A_{i, j}\,|(X_i,X_j) \stackrel{\rm indep.}{\sim} {\rm Bin}(1,P_{i, j}), \quad i,j = 0,\dots, n\,\, \:\mbox{with}\:\:i\neq j.
\end{equation}
The random adjacency matrix $\bm A$ specifies a random directed graph $G$ on the vertices $\{0,1,\dots,n\}$; one could also modify the procedure to produce undirected graphs.

We refer to vertex $0$ as the \emph{origin}.
This vertex can be a vertex of our choosing that is already in the network, or it can be a vertex that is artificially introduced into the network to gain local information about it.

Consider the following function on $\mathbb{R}^d$:
\begin{equation}\label{def:connection_probability}
p(x) = \mathbb{E}\,\rho(\|X-x\|), \quad X\sim F,
\end{equation}
which we call the \emph{local connection probability} (at $X_0=x$).
Denote by $B_i$ the out-degree of vertex $i$, i.e.,
\begin{equation}\label{def:B_i}
B_i = \sum_{\substack{j=0\\ j\neq i}}^n A_{i, j}, \quad i=0, \dots, n,
\end{equation}
such that for any $i=0,\dots,n$,
\begin{equation}\label{def:B_i_given_X_i}
B_i\,|\,X_i \sim {\rm Bin}\big(n, p(X_i)\big),
\end{equation}
and in the specific case of the origin,
\begin{equation}\label{def:B_0}
B_0\,|\,X_0 = B_0 \sim {\rm Bin}\big(n, p(x)\big).
\end{equation}
Our objective is to estimate the local degree distribution (at $x$), i.e., the distribution of $B_0$;
equivalently, we want to estimate the local connection probability at $x$, $p(x)$.
We note that $p(x)$ may depend on $n$.

%%%%%%%%%%%%%%%%%%%%%%%%%%%%%%%%%%%%%%%%%%%%%%%%%%
\section[An Estimator for the Local Connection Probability]{An Estimator for the Local Connection Probability}\label{sec:estimator}

In this section we propose a family of estimators for the local connection probability $p(x)$.
For now the feature $X_0=x$ associated with the origin $0$ is fixed, so we often omit $x$ from the notation.
The natural, {\it empirical estimator} is
\begin{equation}\label{def:hat_p_0}
\hat p_0 := \frac{B_0}{n},
\end{equation}
This estimator is unbiased and
always available to us. In particular it satisfies the central limit theorem \footnote{See Appendix~\ref{apx:proofs:CLT}.}
\begin{equation}\label{eq:asymptotics_p_0}
\sqrt{\frac{n\, p(x)}{1-p(x)}}\left\{\frac{\hat p_0}{p(x)}-1\right\} \stackrel{\rm d}{\longrightarrow} N(0,1),
\end{equation}
as long as $n\,p(x)\to\infty$
(recalling  that we allowed $p(x)$ to depend on $n$).

For the problem of estimating a local connection probability to be meaningful, the underlying graph should be (at least locally) fairly sparse so that the average degree of a vertex is not too large.
The origin, in particular, should satisfy this property, so that the typical case should be that $n\, p(x)$ is either constant or grows slowly.
The quantity $n\, p(x)$, which is the expected degree of the origin, determines the rate for the empirical estimator.
The conclusion from the above is that one should expect the empirical estimator to have low precision. As a consequence, there is a lot to be gained from trying to improve upon it.

Having~\eqref{def:B_i_given_X_i} and~\eqref{def:B_0} in mind, if $p(x)$ were an arbitrary (i.e., not necessarily smooth) function of $x$, then one should not expect to be able to improve upon~\eqref{def:hat_p_0}.
However, if the model described in the previous section is appropriate for the network at hand, then~\eqref{def:connection_probability} suggests that $p(x)$ should be a smooth function of $x$:
if $\rho$ is a continuous function and $F$ is a continuous distribution, we expect small changes in $x$ to have a small impact on $p(x)$. %\footnote{\tiny\tt [MM] This sentence I don't get. Do you mean that you require that $\rho$ is smooth? Or that $F(\cdot)$ is continuous? [PS] I expand the sentence to clarify in the direction that you mention.}.
In this case, it is reasonable to try and improve upon~\eqref{def:hat_p_0} by borrowing information from neighbours of the origin.

Denote by $\delta(i,j)$ the geodesic distance between vertices $i$ and $j$, and let $V_k(i)$ be the collection of vertices that are within a distance $k$ of vertex $i$, i.e.,
\begin{equation}\label{def:V_k_of_i}
V_k(i) := \{j: \delta(i,j)\le k\}.
\end{equation}
We abbreviate $V_k=V_k(0)$.
Further, we also define $V_{-1}(i) := \emptyset$, for $i=0,\dots,n$.
A natural estimator of $p(x)$, that incorporates information from the neighbours of the origin (up to distance $k$), is
\begin{equation}\label{def:hat_p_k}
\hat p_k :=
\frac1{\sum_{\ell=0}^k w_\ell} \sum_{\ell=0}^k \frac{w_\ell}{|V_\ell\backslash V_{\ell-1}|}\sum_{i\in V_\ell\backslash V_{\ell-1}}\,\frac{B_i}n.
\end{equation}
Here $\{w_\ell\}_{\ell=0,\dots,n}$ (eventually depending on $n$) is a sequence of non-negative weights that specify the contribution of vertices based on their distance to the origin.
The estimator is also parameterised by the distance $k$;
note that~\eqref{def:hat_p_0} is recuperated from~\eqref{def:hat_p_k} by setting $k=0$ (i.e., by ignoring all information from vertices other than the origin), so the family of estimators $\hat p_k$ generalises the empirical estimator.

In the sequel we fix, without loss of generality, $w_0:=1$.
For now, the weights $w_\ell$ and the geodesic distance $k$ are left unspecified; one of the objectives of this paper is to determine appropriate choices for $w_\ell$ and   $k$.

\begin{remark}{\em
Our estimator is a weighed average of the degrees of all vertices that are close to the origin.
This closeness is measured in terms of the geodesic distance $k$ to the origin.
Alternatively, one can also average over the $m$ vertices that are closest  to the origin;
we denote this estimator as $\check p_m$.
%\footnote{\tt\tiny [MM] I think we have to avoid ambiguous notation -- one cannot use $\hat p_m$ for both. That's the reason why I introduced $\check p_m$. Please check whether I did this consistently. [PS] I agree; I double checked to see if the right notation is being used everywhere.}
Note that the estimates $\hat p_k$ simply correspond to a random subsequence of the estimates $\check p_m$.
The analysis of both estimators is fully analogous, but the notation becomes rather cumbersome when considering the estimator $\check p_m$.
Because of this, we frame our theoretical analysis of the properties of the estimator in terms of the \emph{neighbourhood size} $k$.
However, for the numerical implementation we use the estimator in terms of the \emph{number of neighbours} $m$.
This is because the effect of changing $m$ is much smaller than the effect of changing $k$, thus substantially improving the numerical stability of the implementation.} \hfill$\Diamond$
\end{remark}

The choice of the weight sequence  $\{w_\ell\}$ is rather important, as it must somehow   compensate for the speed at which the connection probability changes as one moves away from the origin.
The resulting estimates $\hat p_k$ should stabilise fast enough as $k$ grows, so as to capture  the local connection probability.

\vspace{3mm}

At different points in this paper we use the stochastic block model (SBM) to illustrate the behaviour of our estimator.
Although technically  RCM does not cover SBM, RCM can be used to model graphs stemming from SBM.
A simple way of doing this is to set $F$ to a mixture of, say, Gaussian (or uniform) distributions.
Then the number of mixture components control the number of communities;
the mixture weights controls the relative sizes of the different communities;
relative distances between location parameters for the mixture components control the probability of inter-community edges being formed;
scale parameters of each mixture component control the probability of intra-community edges being formed.
SBM is quite useful to illustrate our estimator since the underlying community structure of the graph is a good example of the local nature of the connection probability.
We also use it here for convenience, as in the SBM context the connection probability allows for easy calculation.

To gain  insight into the impact that the weight sequence $\{w_\ell\}$  has on~\eqref{def:hat_p_k}, we consider three special cases; these specific choices will be motivated in Section~\ref{sec:oracle_bound}. Define for $\ell = 1, \dots, n$ and some small $0<\gamma\ll1$ (depending on $n$),
\begin{equation}\label{eq:weight_examples}
w_\ell^{(1)} := |V_\ell\backslash V_{\ell-1}|, \qquad
w_\ell^{(2)} \equiv 1, \qquad
w_\ell^{(3)} := \gamma(1-\gamma)^{-\ell}
,\qquad
\end{equation}
and $w_0^{(i)}:=1$. Note that the weights $w_\ell^{(2)}$  are constant in $\ell=1, \dots, n$.
The first set of weights leads to a na\"{\i}ve estimator that satisfies
\[
n\cdot\hat p_k^{(1)} =
\frac1{|V_k|} \sum_{i\in V_k} B_i, \quad k=0,\dots,n
\]
(adopting the notation that $\hat p_k^{(i)}$ is the estimator based on the weights $\{w_\ell^{(i)}\}$).
The second set of weights weighs each annulus of neighbours of the origin equally.
The third set of weights down-weighs all vertices other than the origin since $\gamma$ is small.

We computed the estimates $\check p_m$ corresponding to these three weight sequences for a graph sampled from an SBM.
There were $n=50$ vertices, split into $3$ communities with respectively $c_1=10$, $c_2=25$, and $c_3=15$ vertices each.
The intra-community connection probabilities $p_i$, $i=1,2,3$, were such that $n\cdot p_i$ was $15$, $25$, and $15$, respectively, and the inter-community connection probability $q$ was such that $n\cdot q=0.5$.
In Figure~\ref{fig:SBM_example_1_graph} we show the network graph.

\begin{figure}[!htb]
\centering
\includegraphics[trim={0cm 0cm 0cm 0cm}, width=0.75\textwidth]{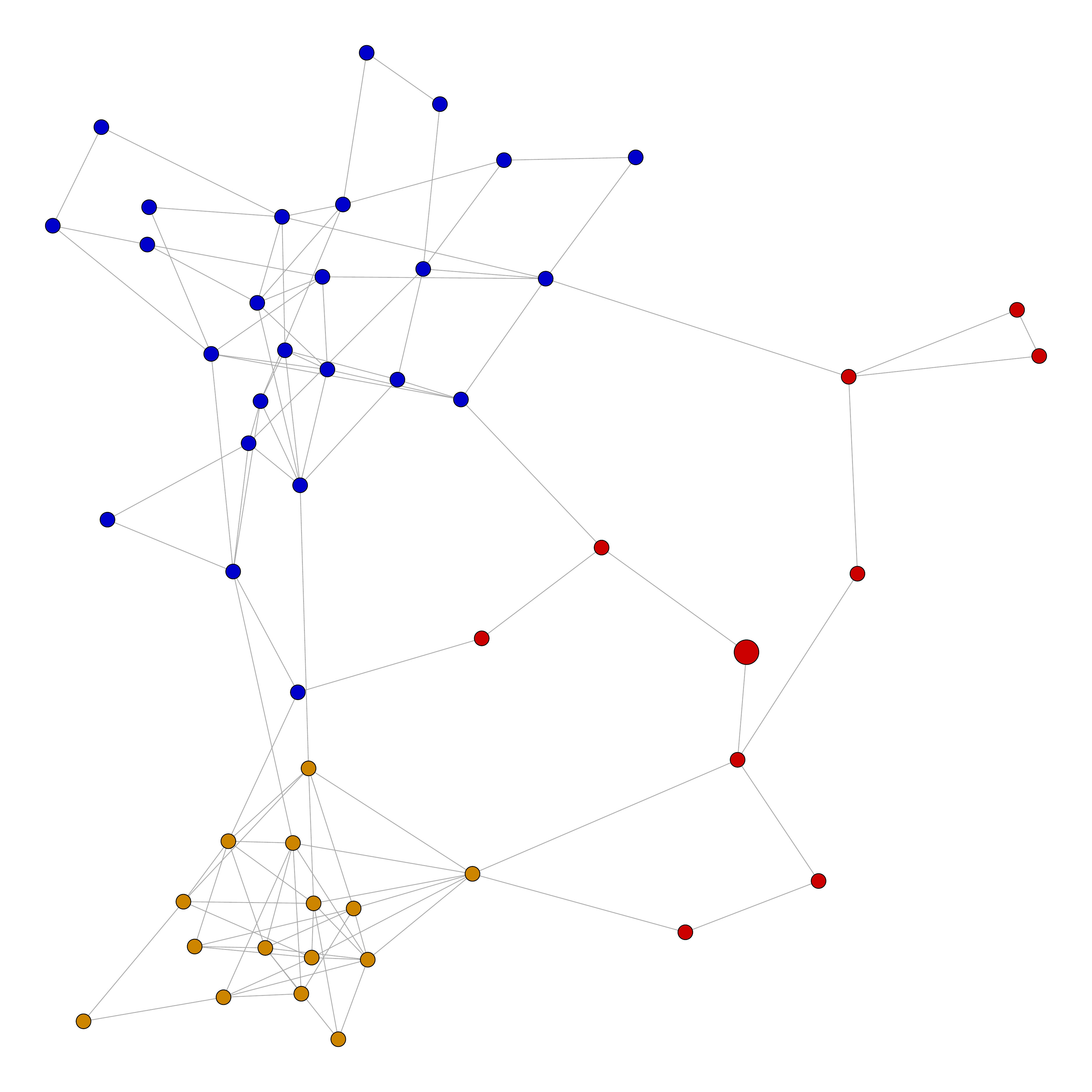}
\caption{\footnotesize
Example of a network sampled from the stochastic block model.
There are three communities corresponding to the three different colours;
the origin vertex is the larger vertex.
}\label{fig:SBM_example_1_graph}
\end{figure}

The connection probability is computed from the parameters of the SBM.
Since the origin was chosen to belong to the first community, $p(x)$ satisfies
\begin{equation}\label{eq:SBM_example_1_truth}
n\cdot p(x) = (c_1-1)\cdot p_1 + (n-c_1)\cdot q.
\end{equation}
This leads to $p(x) = (9 \cdot 0.3+ 40\cdot 0.01)/50 = 0.062$.
In Figure~\ref{fig:SBM_example_1_estimates} we show the corresponding estimates of $p(x)$ as a function of the number of neighbours $m$ (in the weights in~\eqref{eq:weight_examples}).

\begin{figure}[!htb]
\centering
\includegraphics[trim={0cm 0cm 0cm 0cm}, width=0.99\textwidth]{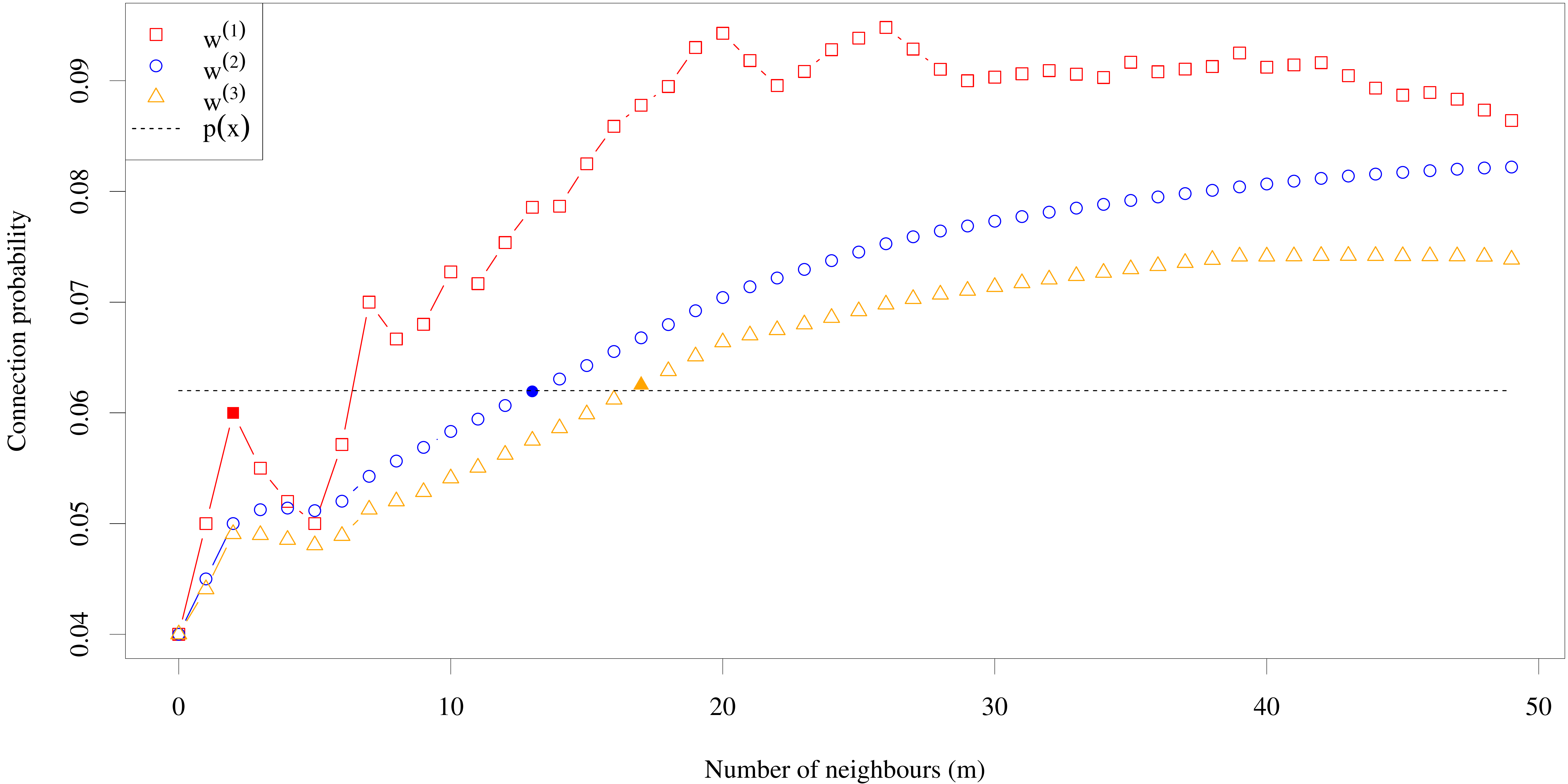}
\caption{\footnotesize
Estimates of the local connection probability for the network represented in Figure~\ref{fig:SBM_example_1_graph} corresponding to the the weight sequences in~\eqref{eq:weight_examples}, as a function of the number of neighbours $m$.
The underlying connection probability of the origin is marked with a dashed line.
For each sequence of estimates, the best estimate of the connection probability is marked in full.
}\label{fig:SBM_example_1_estimates}
\end{figure}

A few conclusions can be immediately drawn from Figure~\ref{fig:SBM_example_1_estimates}.
The empirical estimate $\hat p_0$ (corresponding to $m=0$ or $k=0$) can be substantially improved by using the weight sequences in~\eqref{eq:weight_examples}; it, however, requires that a (non-trivial) choice is made for the parameter $m$.
(This nontrivial choice is termed an \emph{oracle}, since it requires knownledge of the unknown $p(x)$.)
The choice of the weights has a strong impact on the corresponding sequences of estimates, and therefore on what the best choice for $m$ is.
Other than that, although the estimators behave differently, they all seem capable of producing estimates with similar accuracy.

The curves in Figure~\ref{fig:SBM_example_1_estimates} are specific for the network at hand. To have a clearer picture of how the different choices of weights affect the estimates, we independently simulated $10^5$ networks distributed like the one in Figure~\ref{fig:SBM_example_1_graph}, and estimated the corresponding mean squared errors (by averaging the squared residuals $m\mapsto\{\check p_m-p(x)\}^2$ over these networks).
Figure~\ref{fig:SBM_example_1_MSE} displays the results.

\begin{figure}[!htb]
\centering
\includegraphics[trim={0cm 0cm 0cm 0cm}, width=0.99\textwidth]{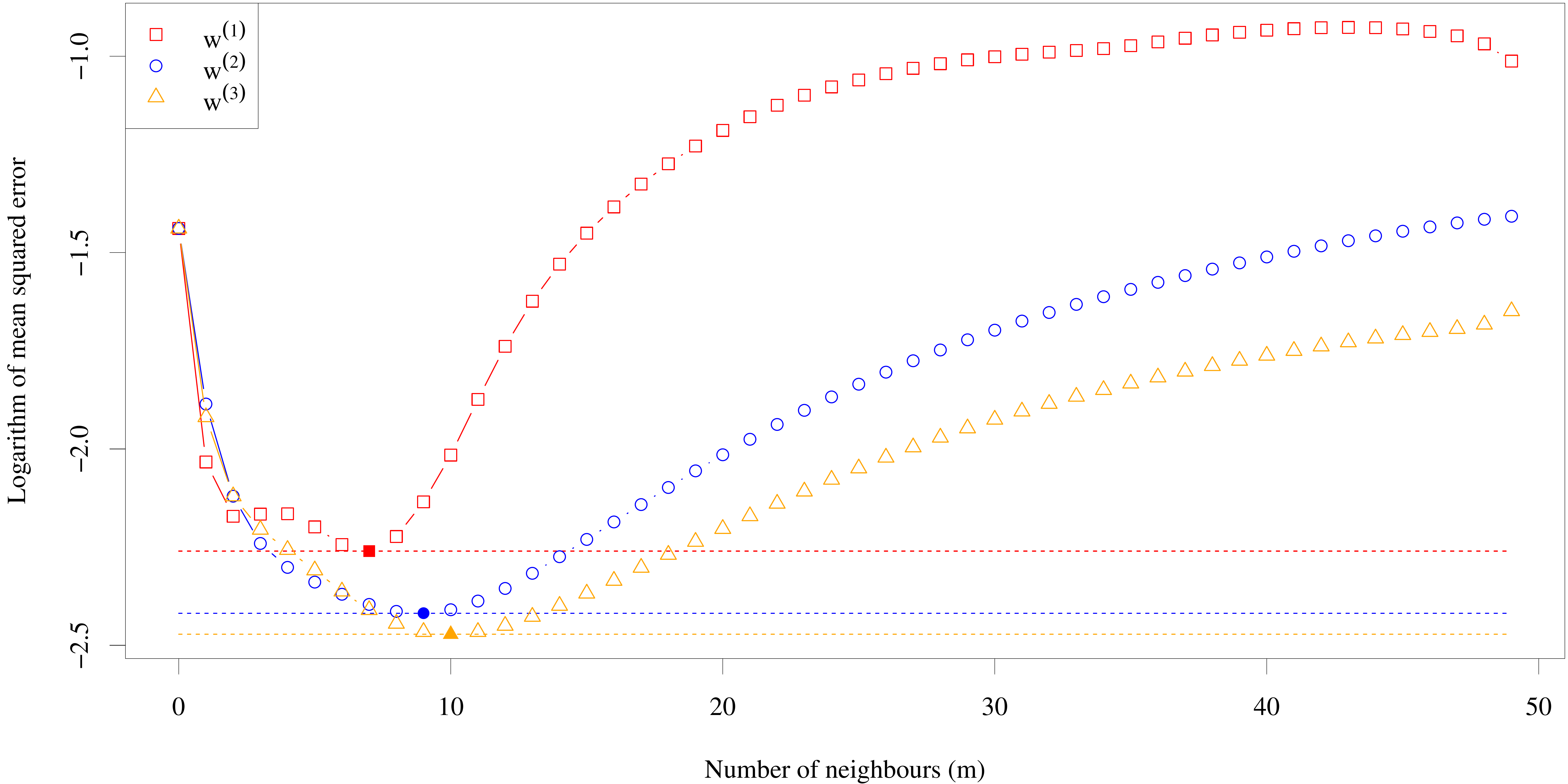}
\caption{\footnotesize
Logarithm of the simulated MSE of the estimators corresponding to the weight sequences in~\eqref{eq:weight_examples}.
The horizontal dashed lines mark the smallest error attained by each of the estimators.
For each of the weight sequences, the MSE corresponding to the best choice of $m$ is marked in full.
}\label{fig:SBM_example_1_MSE}
\end{figure}

Figure~\ref{fig:SBM_example_1_MSE} suggests that, on average, for the instance considered the weights $w^{(2)}$ and $w^{(3)}$
perform somewhat better than the weights $w^{(1)}$ (where it is noted that there is a small difference between the former two).
In this context, `better performance' means that the corresponding estimators are capable of attaining a smaller MSE (for an appropriate choice of the parameter $m$).

To exemplify how the optimal choice of the parameter $m$ is affected by the sample size $n$, we consider a sequence of graphs sampled from the SBM.
For each $n\in\{1+i\cdot 100, i=1,\dots,10\}$ we consider $3$ communities with respectively $c_{n,i}$, $i=1,2,3$, vertices with
\[
c_n := (c_{n,1},c_{n,2},c_{n,3}) =
\big([n/5],\, [n/2],\, n-c_{n,1}-c_{n,2}\big),
\]
where $[\,\cdot\,]$ denotes rounding to the closest integer.
This choice ensures that the relative sizes of the communities remains nearly constant as $n$ increases.
The probability that two vertices, both belonging to community $i$, establish an edge is $p_{n,i}$, with
\[
p_n = (p_{n,1},p_{n,2},p_{n,3}) = 1\wedge
\Big(15\cdot \frac{\log n}{n},\, 10\cdot \frac{\log n}{n},\, 20\cdot \frac{\log n}{n}\Big),
\]
where the minimum with $1$ should be understood entry-wise.
These probabilities ensure that the number of edges between each vertex and other vertices in the same community is proportional to $\log n$.
The probability that two vertices from different communities establish an edge is $q_n=1/n$;
this way, the average number of edges that each individual vertex establishes with vertices from other communities remains constant as $n$ grows.
The example from Figure~\ref{fig:SBM_example_1_graph} corresponds to setting $n=50$.

For these choices, it is readily checked that the underlying local connection probability is
\[
p(x) = p_n(x) = 3 \cdot \frac{\log n}n\big(1+o(1)\big).
\]
In particular, we see that the choices above put us in a regime where the empirical estimator $\hat p_0$ is consistent (with a rate of the order $\sqrt{\log n}$) in the sense of~\eqref{eq:asymptotics_p_0}.
Nonetheless, graphs sampled from the SBM with these parameters will still be relatively sparse so that the eccentricity of the origin is large.
This means that it is possible to pick $m$ to be large (so as to reduce the variance of the estimator) without the estimator degenerating.

\begin{figure}[!htb]
\centering
\includegraphics[trim={0cm 0cm 0cm 0cm}, width=0.75\textwidth]{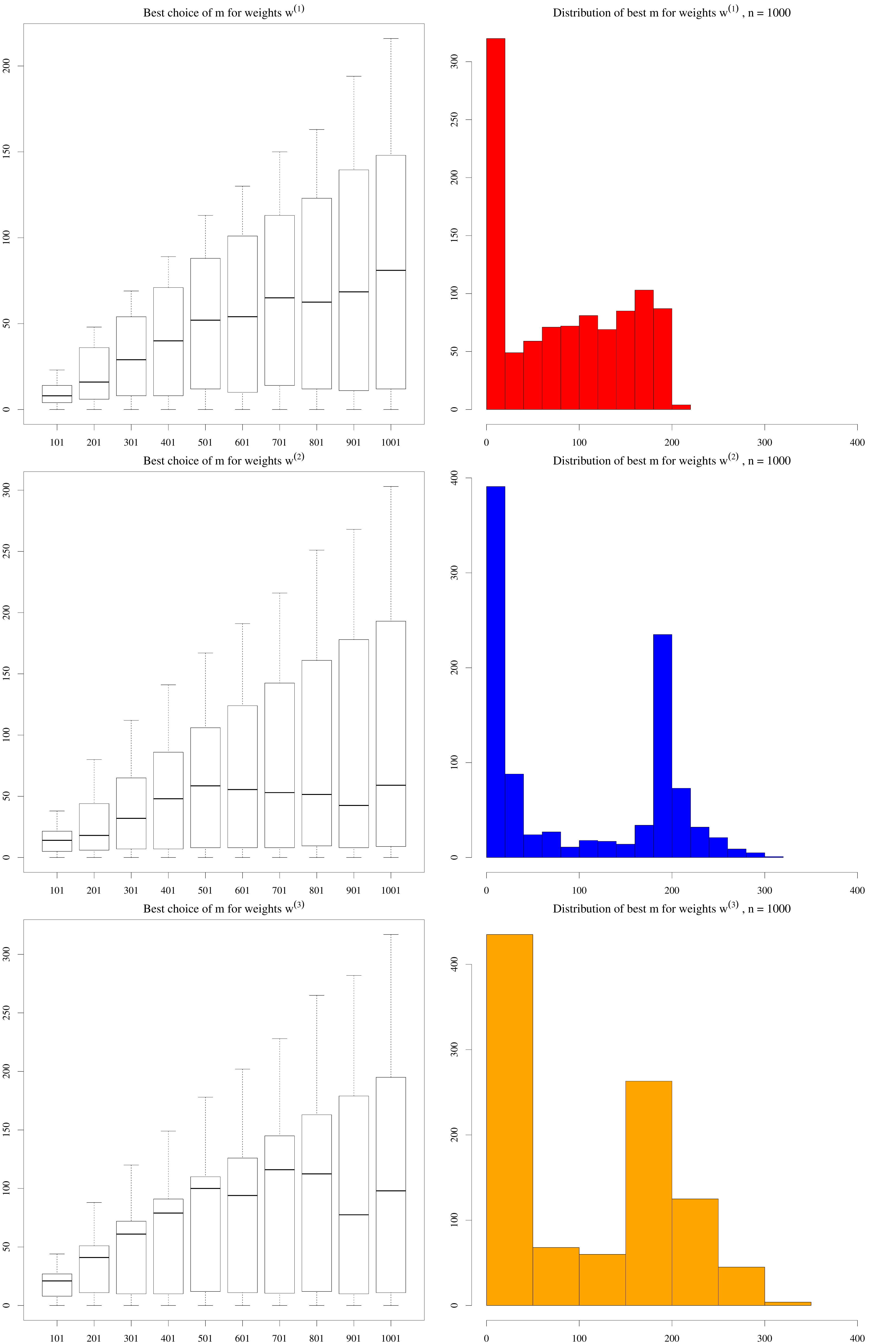}
\caption{\footnotesize
Best choice (oracle) for the number of neighbours $m$ for different sample sizes.
Each row of plots corresponds to a different weight sequence.
The boxplots in the panels on the left correspond to the best value for $m$ as far as optimising the MSE is concerned, for different choices of $n$.
In the panels on the right, each barplot shows the distribution of the $m$ for the largest value of $n$ that was considered.
}\label{fig:SBM_example_1_best_k}
\end{figure}

For each sequence of weights $w^{(i)}$, $i=1,2,3$, and each value of $n\in\{1+i\cdot100,i=1,\dots,10\}$ we simulated $10^3$ networks from the SBM with the aforementioned parameters.
Figure~\ref{fig:SBM_example_1_best_k} shows the results of this experiment.
The boxplots on the left show that the average values of the oracle
for $m$ seem to somewhat stabilise as $n$ grows, although there is quite a lot of variability.
(Note that the oracle is random since it depends on the underlying network.)
The histograms on the right show the distribution of the optimal $m$ for the three sets of weights corresponding to the largest sample size ($1000$, that is).
For the weights $w^{(1)}$ the optimal choice of $m$ tends to be rather small.
For the other two sets of weights the distribution of the optimal $m$ turns out  to be bimodal;
the first peak should reflect the intuitive idea of averaging over neighbours in the same community.
However, for each particular network, the bias of the estimator is in principle arbitrary, so that there could be other choices of $m$ (that do not necessarily conform to averaging over neighbours) that produce more accurate estimates.

\begin{figure}
%[!htb]
\centering
\includegraphics[trim={0cm 0cm 0cm 0cm}, width=0.66\textwidth]{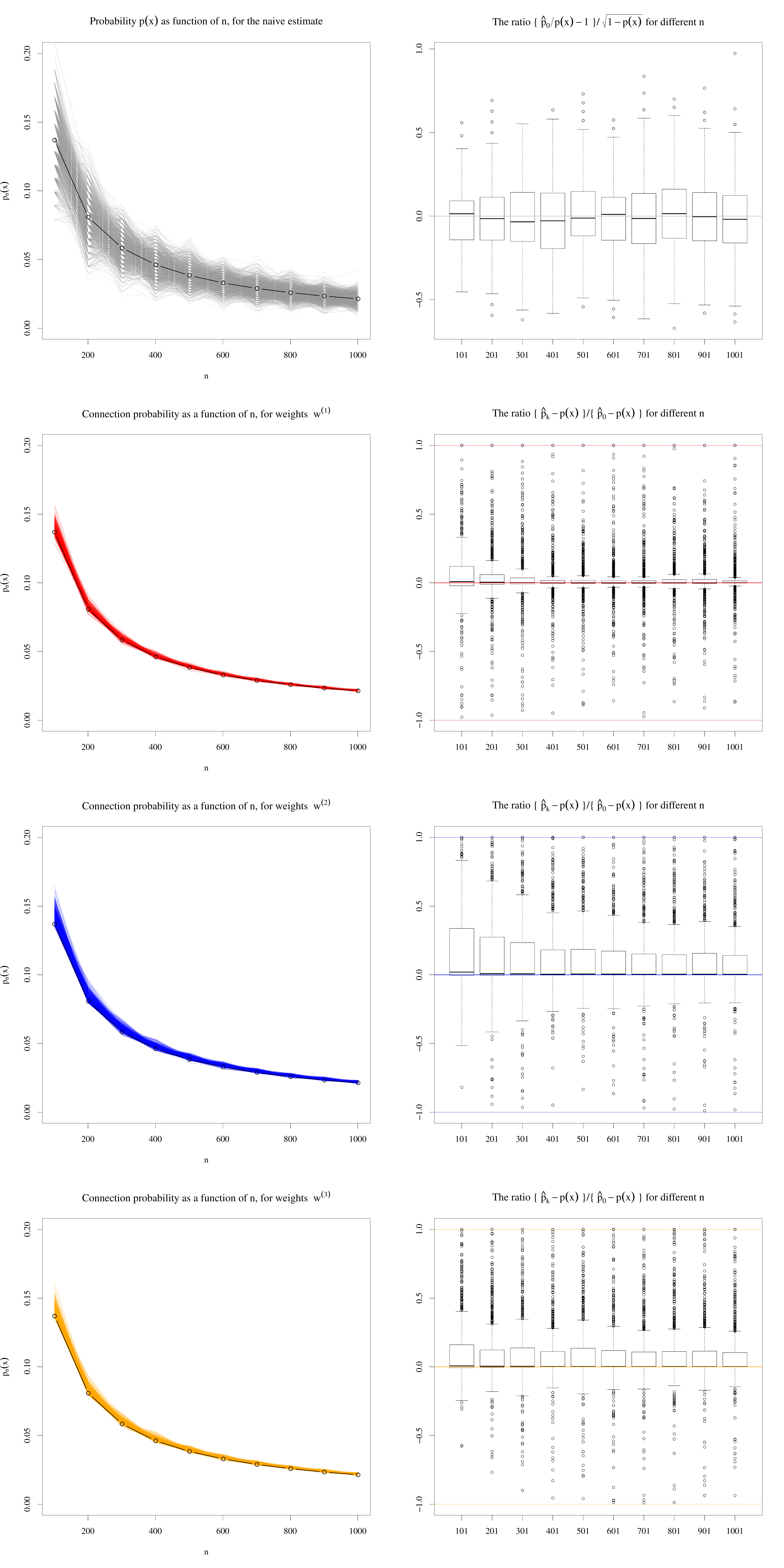}
\caption{\footnotesize
Estimates and errors corresponding to the best (oracle) choices of the number of neighbours $m$ for different sample sizes.
The left plots show the true connection probability (in black) for the different sample sizes, and the estimates for the empirical estimator and the three sets of weights in respectively grey, red, blue and orange.
On the right side we have boxplots of the errors for the different sample sizes.
The top one corresponds to the rescaled error of the empirical estimator, and the other three correspond to the ratio of the error estimates for each set of weights (for the oracle choice of $m$), divided by the error of the empirical estimator.
}\label{fig:SBM_example_1_best_ests}
\end{figure}

We now compare, for the chosen values of $n$ the estimates with the corresponding true values of the local connection probability.
%\clearpage
The error of the empirical estimator is, as suggested by~\eqref{eq:asymptotics_p_0}, under control.
It is already clear, visually, that any of the weight sequences leads to considerably better estimates of the connection probability.
Particularly the na\"{\i}ve estimator does quite well in this case.
This is probably due to the fact that the connection probability is constant in a neighbourhood of the origin so that it is less meaningful to weigh  the degrees of vertices based on their distance to the origin.

To quantify the effect of the weights and of the size of the neighbourhood on the estimator, we present in Section~\ref{sec:oracle_bound} an  oracle inequality for the MSE of our estimator~\eqref{def:hat_p_k}.

%%%%%%%%%%%%%%%%%%%%%%%%%%%%%%%%%%%%%%%%%%%%%%%%%%
\section[An Oracle Inequality]{An Oracle Inequality}\label{sec:oracle_bound}

To understand how the weights $w_l$, and the parameter $k$ should be selected so as to minimise the MSE of the estimator~\eqref{def:hat_p_k}, we present in this section an oracle inequality.
This inequality bounds the MSE of the estimator in terms of the model and of the parameters of the estimator.

If we take out the last term of the sum in~\eqref{def:hat_p_k} and scale appropriately, then
we find the recursion
\begin{equation}\label{def:recursion_one_step_estimator}
\hat p_{k+1} =
\hat p_k + \frac{w_{k+1}}{\sum_{l=0}^{k+1} w_l}\Bigg( \frac1{n|V_{k+1}\backslash V_{k}|}\sum_{i\in V_{k+1}\backslash V_{k}}B_i - \hat p_k \Bigg).
\end{equation}
This recursion can be written as $\hat p_{k+1} = \hat p_k + \gamma_k\cdot G_k$, where
\begin{equation}\label{def:recursion_weight_and_increment}
\gamma_k := \frac{w_{k+1}}{\sum_{l=0}^{k+1}w_l}, \quad
G_k := \frac1{n|V_{k+1}\backslash V_{k}|}\sum_{i\in V_{k+1}\backslash V_{k}}B_i - \hat p_k,
\end{equation}
$k = -1, \dots,n-1$.
Algorithms of this form are known as {\it stochastic approximation algorithms}~\cite{kiefer1952stochastic,kushner2003stochastic,robbins1951stochastic}.
The idea is that $G_k$ should behave like the average $p(X_i)-\hat p_k$ over vertices $i$ at a graph distance $k$ of the origin.
If $k$ is small enough so that on average $p(X_i)-p(x)$ is small, then $G_k$ should be close to $p(x)-\hat p_k$.
If this is the case, then the recursion~\eqref{def:recursion_one_step_estimator} moves each estimate $\hat p_k$ towards $p(x)$.
The sequence $\gamma_k$ is typically chosen so that its sum diverges, but so that it is square summable.
The reason for this will become clear from the bound featuring in Theorem~\ref{theo:oracle_bound}.

If we subtract $p(x)$ from both sides of~\eqref{def:recursion_one_step_estimator} and denote $\delta_k:=\hat p_k-p(x)$, then
\begin{equation}\label{eq:recursion_one_step_error}
\delta_{k+1} = (1-\gamma_k)\cdot\delta_k + \gamma_k\cdot \left\{ G_k - p(x) \right\}.
\end{equation}
From this identity a bound on the risk of the estimator~\eqref{def:hat_p_k} follows, under a specific condition to be imposed on the shape
of the connection function $\rho$.

\begin{theorem}[Oracle bound]\label{theo:oracle_bound}
Consider the estimator $\hat p_k$ of $p(x)$ defined in~\eqref{def:hat_p_k} and the sequence $\gamma_i$ defined in~\eqref{def:recursion_weight_and_increment}.
Assume that the distribution of feature points $\bm X$ and $\bm Y$  and the function $\rho$ are such that
\begin{equation}\label{eq:moment_condition}
\Big[\mathbb{E}\Big\{ \rho\big(\|\bm X - \bm Y\|\big) - p\big(\bm X\big) \Big\}^3\Big]^2 =
O\left[\!\!\left[ n \cdot \Big[ \mathbb{E}\Big\{ \rho\big(\|\bm X - \bm Y\|\big) - p\big(\bm X\big) \Big\}^2 \Big]^3 \right]\!\!\right].
\end{equation}
Then, for any $0\le k_0\le k \le n$, with $\sigma^2 := \mathbb{V}{\rm ar}\,P_{i, j}$, the following bound holds:
\begin{equation}\label{eq:oracle_bound}
\begin{aligned}
\mathbb{E}|\delta_{k+1}|^2 &\lesssim
\mathbb{E}|\delta_{k_0}|^2 \cdot \exp\Big(-2 \sum_{i=k_0}^{k}\gamma_i\Big) +
\left(\sum_{i=k_0}^{k}\gamma_i^2\right) + \Big(\sum_{i=k_0}^{k}\gamma_i\Big)^2 \cdot\\ &\qquad
\cdot\left[
n^{-1}\Big\{3+4 \sigma^2\cdot\log n\Big\} +
\mathbb{E}\max_{i\in V_{k+1}} \big|p(X_i)-p(x)\big|^2
\right],
\end{aligned}
\end{equation}
where the inequality holds up to a universal, multiplicative constant.
The expectation is taken under the model described in Section~\ref{sec:model}, where $p(x)$ is the connection probability at the origin.
\end{theorem}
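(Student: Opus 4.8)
The plan is to read \eqref{eq:recursion_one_step_error} as a (random-coefficient) linear recursion for the error and to solve it explicitly. Iterating $\delta_{k+1}=(1-\gamma_k)\delta_k+\gamma_k\{G_k-p(x)\}$ from an arbitrary index $k_0$ gives
\[
\delta_{k+1}=\Big(\prod_{j=k_0}^{k}(1-\gamma_j)\Big)\,\delta_{k_0}+\sum_{i=k_0}^{k}\gamma_i\Big(\prod_{j=i+1}^{k}(1-\gamma_j)\Big)\big\{G_i-p(x)\big\}.
\]
Because every $\gamma_j=w_{j+1}/\sum_{l\le j+1}w_l$ lies in $[0,1]$, each partial product lies in $[0,1]$; I bound the inner products by $1$ and, on the leading product, use $1-t\le e^{-t}$ to get $\prod_{j=k_0}^{k}(1-\gamma_j)^2\le\exp(-2\sum_{i=k_0}^{k}\gamma_i)$. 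Taking second moments and using $(a+b)^2\le 2a^2+2b^2$ (absorbing constants into the universal factor permitted by the statement) peels off the memory term $\mathbb{E}|\delta_{k_0}|^2\exp(-2\sum\gamma_i)$ in \eqref{eq:oracle_bound}, and reduces the problem to the second moment of the accumulated fluctuation $S:=\sum_{i=k_0}^{k}\gamma_i(\prod_{j=i+1}^{k}(1-\gamma_j))\{G_i-p(x)\}$.

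Next I would split each increment into a noise and a bias part around the conditional degree mean. With $A_{i+1}:=V_{i+1}\backslash V_i$ and $\mathbb{E}[B_j/n\mid X_j]=p(X_j)$ from \eqref{def:B_i_given_X_i},
\[
G_i-p(x)=\underbrace{\frac{1}{|A_{i+1}|}\sum_{j\in A_{i+1}}\Big(\tfrac{B_j}{n}-p(X_j)\Big)}_{=:M_i}+\underbrace{\frac{1}{|A_{i+1}|}\sum_{j\in A_{i+1}}\big(p(X_j)-p(x)\big)}_{=:b_i}.
\]
The bias part is immediate: since $A_{i+1}\subseteq V_{k+1}$ for all $i\le k$, one has $|b_i|\le\max_{j\in V_{k+1}}|p(X_j)-p(x)|$, whence the bias contribution to $S$ is at most $(\sum_i\gamma_i)\max_{j\in V_{k+1}}|p(X_j)-p(x)|$; squaring and taking expectations gives precisely the term $(\sum\gamma_i)^2\,\mathbb{E}\max_{i\in V_{k+1}}|p(X_i)-p(x)|^2$.

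The substantive part is the noise term $\sum_i\gamma_i(\prod_{j>i}(1-\gamma_j))M_i$, and here the main obstacle lives. Conditionally on the features the degrees $\{B_j\}$ are independent with $B_j\sim\mathrm{Bin}(n,p(X_j))$, so each centred summand has conditional variance $p(X_j)(1-p(X_j))/n=O(1/n)$; if the annuli were fixed, $M_i$ would be a centred average and the noise would contribute only at the $\sum\gamma_i^2\cdot n^{-1}$ level. The difficulty is that $A_{i+1}=V_{i+1}\backslash V_i$ is defined through the geodesic distances of the \emph{same} graph whose degrees are being averaged, so the event $\{j\in A_{i+1}\}$ (that $j$ sits at distance $i+1$ from the origin) is correlated with $B_j$ and favours higher-degree vertices; consequently $M_i$ carries a selection drift, is not centred, and the $\{M_i\}$ are not orthogonal across annuli. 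My plan is to expand $\mathbb{E}|S|^2$ into diagonal and off-diagonal contributions, bound the diagonal crudely via $|M_i|\le 1$ (this is what produces the bare $\sum_i\gamma_i^2$), and for the remaining terms split $M_i$ into a genuinely centred component and its conditional (selection) mean. The centred component is orthogonal and contributes at the $n^{-1}$ scale; the selection mean is common across annuli and therefore accumulates with weight $(\sum_i\gamma_i)^2$, which is what makes the factor $(\sum\gamma_i)^2$ multiply the $n^{-1}\{3+4\sigma^2\log n\}$ term. The variance $\sigma^2=\mathbb{V}\mathrm{ar}\,P_{i,j}$ enters through the covariance between a vertex's degree and the connectivity pattern that places it in a given annulus, and the $\log n$ appears through a maximal bound over the at most $n$ candidate vertices.

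Controlling this selection drift is exactly what the moment condition \eqref{eq:moment_condition} is for: it is a Lyapunov-type comparison of the third moment of the centred connection variable $\rho(\|\bm X-\bm Y\|)-p(\bm X)$ against $n$ times the cube of its second moment, which is precisely the inequality needed to bound the third-order mixed terms that the decoupling of set-membership from edge randomness generates, and to guarantee they are dominated by the second-order $n^{-1}$ contribution rather than inflating the bound. I expect essentially all of the technical weight of the proof to sit in this decoupling step—conditioning in the order features, then geodesic structure, then edges, and absorbing the cross terms via \eqref{eq:moment_condition}; once it is in place, the memory, bias, diagonal, and drift pieces assemble directly into \eqref{eq:oracle_bound}.
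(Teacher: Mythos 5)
Your skeleton (iterate the recursion, peel off the memory term with $1-t\le e^{-t}$, bound the bias through $\max_{i\in V_{k+1}}|p(X_i)-p(x)|$) matches the paper, but the core of your plan rests on a misreading of the model, and the step you say carries ``essentially all of the technical weight'' would not go through as described. The graph is \emph{directed} and $B_j$ is the \emph{out}-degree; membership of $j$ in the annulus $V_{i+1}\backslash V_i$ is determined by out-edges of vertices already in $V_i$ (equivalently, by in-edges of $j$), never by out-edges of $j$ itself. Hence, given the features, the edge noise in $B_j$ is independent of the annulus structure: the ``selection drift'' favouring high-out-degree vertices that you build your argument around does not exist. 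The paper makes this precise by introducing the filtration $\mathcal{F}_k^{(n)}=\sigma\{\bm X,(\epsilon_{i,j}:i\in V_k,\,j=0,\dots,n)\}$, noting that $V_{k+1}$ is $\mathcal{F}_k^{(n)}$-measurable and that $\mathbb{E}[A_{i,j}\mid\mathcal{F}_k^{(n)}]=P_{i,j}$ for $i\notin V_k$, so that $D_k:=G_k-\mathbb{E}[G_k\mid\mathcal{F}_k^{(n)}]$ are genuine martingale increments; Burkholder's maximal inequality applied to $\max_i\bigl|\sum_{j\le i}\gamma_jD_j\bigr|$ (after an Abel-summation step) is what yields the $\sum_i\gamma_i^2$ term. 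By denying this centring/orthogonality and planning instead to isolate and bound a phantom drift, you are left with no valid mechanism for the noise term.

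Relatedly, you lump into $M_i$ two fluctuations the paper must treat separately: the edge noise $B_j/n-\frac1n\sum_lP_{j,l}$ (the martingale part above) and the \emph{feature-level} fluctuation $Z_n^{(j)}:=\frac1n\sum_l\{P_{j,l}-p(X_j)\}$. The latter, not any degree--annulus covariance, is where $\sigma^2=\mathbb{V}\mathrm{ar}\,P_{i,j}$ and the $\log n$ come from: which vertices land in $V_{k+1}$ does correlate with this feature fluctuation, and the paper sidesteps that selection crudely, bounding the contribution by $\bigl(\sum_i\gamma_i\bigr)^2\max_{j\in V_{k+1}}|Z_n^{(j)}|^2$ and then controlling $\mathbb{E}\max_j|Z_n^{(j)}|^2$ by a union bound over all $n$ vertices combined with Gaussian tails. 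The moment condition~\eqref{eq:moment_condition} is used precisely there: it is a Lyapunov-type third-moment bound giving uniform integrability of $(S_n^{(i)})^2/\sigma_n^2$ for $S_n^{(i)}=\sum_l\{P_{i,l}-p(X_i)\}$, hence a CLT for these partial sums and the tail bound $\mathbb{P}(|Z_n^{(i)}|>M)\lesssim \exp\{-nM^2/(2\sigma^2)\}$; choosing $M^2=4\sigma^2\log(n+1)/n$ produces the factor $n^{-1}\{3+4\sigma^2\}\log(n+1)$, i.e.\ exactly the middle bracket of~\eqref{eq:oracle_bound}. Your attribution of~\eqref{eq:moment_condition} to ``third-order mixed terms from decoupling set-membership from edge randomness'' is therefore incorrect, and without the edge/feature split and the CLT-plus-union-bound argument your plan cannot reproduce the $n^{-1}\{3+4\sigma^2\log n\}$ term.
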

\begin{proof}
The bound is obtained from~\eqref{eq:recursion_one_step_error}.
The remainder of the derivation can be found in Section~\ref{apx:proofs:oracle_bound} of the Appendix.
\end{proof}
Note that this bound is not asymptotic, but rather holds for any $0\le k_0\le k \le n$.
We mention that in principle this bound cannot be simplified without imposing further assumptions on the model.

The assumption~\eqref{eq:moment_condition} is trivial if, for example, $\rho$ does not depend on $n$.
Another thing to note is that since the differences in the condition belong to $[0,1]$, then a sufficient condition for~\eqref{eq:moment_condition} to hold is
\[
n \cdot \mathbb{E}\Big\{ \rho\big(\|\bm X - \bm Y\|\big) - p\big(\bm X\big) \Big\}^2 > \delta,
\qquad \delta>0.
\]

The choice of a sequence $\gamma_i$ (or equivalently of the weights $w_i$) and of $k$ can now be motivated from the point of view of minimising the upper bound in~\eqref{eq:oracle_bound}.
Clearly, the upper bound is at least as large as the first term on the right hand side, so the sum in the exponent should diverge.
However, this same sum scales the third term in the upper bound, entailing that the sum should grow {\it slowly} to infinity.
Finally, the second term in the upper bound should converge to zero (i.e., the sequence $\gamma_i$ should be square summable).

The second line in~\eqref{eq:oracle_bound} contains two terms.
The first should in general be negligible, while the second is an approximation term that accounts for the fact that the connection probability is not constant over all vertices in the network.
The latter term is of course model specific.
For example in the SBM, $p(X_i)=p(x)$ as long as vertex $i$ belongs to the same community as the origin; cf.~\eqref{eq:SBM_example_1_truth}.
Ideally one would take $k$ to be as large as possible as long as all the vertices in $V_{k+1}$ belong to the same community.
In general, the slower $p(x)$ changes, the larger $k$ can be taken.\par

% \section[Minimising the oracle bound]{Minimising the oracle bound}\label{sec:minimise_bound}

The weights $w_\ell$ are one of the parameters of our estimator $\hat p_k$.
The examples in~\eqref{eq:weight_examples} correspond to the following choices for $\gamma_i$:
\begin{equation}\label{eq:gamma_examples}
\gamma_i^{(1)} =
\frac{|V_{i+1}\backslash V_i|}{|V_{i+1}|}, \quad
\gamma_i^{(2)} = \frac1{i+1}, \quad
\gamma_i^{(3)} \equiv \gamma,\qquad
i = 0, \dots, n,
\end{equation}
for some small $0<\gamma\ll1$.
(Technically, the bound in~\eqref{eq:oracle_bound} does not apply to the sequence $\gamma_i^{(1)}$ since it is random.)
In general $\gamma$ is chosen to depend on $n$ and on the smoothness of $p(x)$.
For H\"older $\beta$-smooth functions one would consider \[\gamma = \gamma_n = (\log n)^{(2\beta-1)/(2\beta+1)} n^{-2\beta/(2\beta+1)};\] cf.~\cite{belitser2013online} for other examples of possible conditions on the smoothness of $p(x)$.

It  now becomes clear  why, in Figure~\ref{fig:SBM_example_1_MSE}, the sequence $w^{(3)}$ tends to outperform $w^{(2)}$ (and $w^{(1)}$) if $k$ is relatively small.
The sequence $w^{(2)}$ corresponds to equally down-weighing any contribution to the estimate that does not come from the degree of the origin.
On the other hand, the sequence $w^{(3)}$ corresponds to weighing the contribution of each annulus of vertices to the estimator, based on their distance to the origin;
since $p(x)$ is flat in a neighbourhood of the origin, this is advantageous.

For each model, one should have a reasonable idea of how the expectation in the second line of~\eqref{eq:oracle_bound} behaves (as a function of $k$, that is).
For our example from Section~\ref{sec:estimator}, $p(x)$ is given by~\eqref{eq:SBM_example_1_truth}, so that this expectation can be easily bounded.
With $j_i
%= C(X_i)
$ representing the community   vertex $i$ belongs to, we have
\[
n\cdot p(X_i) - n\cdot p(x) = (c_{j_i}-c_1)\cdot q + c_1\cdot p_1-c_{j_i}\cdot p_{j_i},
\]
which in absolute value is bounded from above by a constant.
The dominating terms in the upper bound~\eqref{eq:oracle_bound} are then the first two terms.
This explains how in Figure~\ref{fig:SBM_example_1_MSE} we can have different combinations of weights $w_\ell$ and $k$ that lead to estimates with a comparable precision level.

The oracle bound from this section is useful when the nature of the underlying graph is well understood.
Nonetheless it is convenient to have a data-driven choice for the parameters of the estimator.
In the next section, we propose a procedure to make this choice.

%%%%%%%%%%%%%%%%%%%%%%%%%%%%%%%%%%%%%%%%%%%%%%%%%%
\section[Choice of the Weights $w_\ell$ and the Distance $k$]{Choice of the Weights $w_\ell$ and the Distance $k$}\label{sec:choice_of_weights_and_k}

In this section we propose a numerical procedure to select the weights $w_\ell$ and the size of the neighbourhood $k$ (or the number of neighbours $m$) from the data.
The idea underlying this procedure is that the oracle bound~\eqref{eq:oracle_bound} provides various reasonable candidates for the sequence of weights $w_\ell$.
For each of these candidate sequences of weights we select the best $k$ in terms of minimising the MSE $\mathbb{E}\{\hat p_k - p(x)\}^2$;
the best sequence of weights attains the best MSE at the optimal choice (or \emph{oracle}) of $k$, and provides us with the best estimate of the local connection probability.

The parameter $k$ can be interpreted as a bandwidth parameter that trades off the squared bias and the variance of the estimator.
If $k=0$, then we get the empirical estimator $\hat p_0$ which is unbiased.
Increasing $k$ biases the corresponding estimate but, due to averaging over neighbours, the variance is reduced.

We propose a Monte Carlo cross-validation (MCCV) procedure to select $w_\ell$ and $k$.
(The procedure can be adjusted in the obvious way if one prefers to work directly with the number of neighbours $m$ instead.)
Given a graph $G$, and a subset of vertices $V\subset\{0,\dots,n\}$, define $G(V)$ to be the subgraph induced by the vertices in $V$.
Denote by $V^{\rm c}$ be the complement of $V$ in $\{0,\dots,n\}$.
Consider $\{V_i\}_{i=1,\dots,M}$, with $M\in\mathbb{N}$, a collection of independent random subsets of the vertices $\{1,\dots,n\}$.
Let $G_i := G(V_i\cup\{0\})$ and $G_i^{\rm c} := G(V_i^{\rm c}\cup\{0\})$, and note that each $G_i$ is independent of $G_i^{\rm c}$, and that all $G_i$ (resp., all $G_i^{\rm c}$) have the same distribution.
Finally, denote by $\hat{p}_{i,k}$ the estimates obtained from $G_i$, and by $\tilde{p}_{i,k}$ the (independent) estimates obtained from $G_i^{\rm c}$.

The MCCV criterion that we propose is simply an estimate of the MSE of our estimator:
\begin{equation}\label{eq:MCCV_criterium}
R(k, \{w_\ell\}) := \frac1M \sum_{i=1}^M \big\{\hat{p}_{i,k} - \tilde{p}_{i,0}\big\}^2.
\end{equation}
Since the two terms in each difference are independent and since $\tilde{p}_{i,0}$ is an unbiased estimator of $p(x)$, we have, for each $i=1, \dots, M$,
\begin{align*}
\mathbb{E}\big\{\hat{p}_{i,k} - \tilde{p}_{i,0}\big\}^2 =&\,
\mathbb{E}\big\{\hat{p}_{i,k} - p(x) + p(x) - \tilde{p}_{i,0}\big\}^2\\ =&\,
\mathbb{E}\big\{\hat{p}_{i,k} - p(x)\big\}^2 + \mathbb{E}\big\{\tilde{p}_{i,0}-p(x)\big\}^2.
\end{align*}
Since the estimators $\hat{p}_{i,k}$, $i=1,\dots,M$ (resp., $\tilde{p}_{i,k}$, $i=1,\dots,M$) have the same distribution, we find
\[
r(k, \{w_\ell\}) := \mathbb{E}\,R(k, \{w_\ell\}) =
\mathbb{E}\big\{\hat{p}_{1,k} - p(x)\big\}^2 + \mathbb{E}\big\{\tilde{p}_{1,0}-p(x)\big\}^2.
\]
This means that although $R(k, \{w_\ell\})$ is not an unbiased risk estimator, its bias (which is the variance of the empirical estimator) is independent of $k$ and of the weights $\{w_\ell\}$ (since the empirical estimator does not depend on the weights). As a consequence, we can use~\eqref{eq:MCCV_criterium} to pick $k$ and $\{w_\ell\}$ in a data-driven way via minimisation of the risk estimate.

The only drawback of this approach is that the risk estimator $R(k, \{w_\ell\})$ refers to the sample size of (roughly) $n/2$.
This means that since we expect the optimal (oracle) for $k$ to grow with $n$, a $\hat k$ that minimises $R(k, \{w_\ell\})$ should underestimate the oracle.
This should in general not be a concern though.
Firstly, the oracle will typically grow slowly with $n$, so that the bias of $\hat k$ should be relatively small.
Secondly, since in principle we have do not have insight into how the bias of the estimator changes with $k$, it is probably a good idea to underestimate $k$ anyway.

In Figure~\ref{fig:SBM_example_1_MCCV_reps_5050_n31} we depict the criterion~\eqref{eq:MCCV_criterium} as a function of the number of neighbours $m$. This we do for the graph from Figure~\ref{fig:SBM_example_1_estimates},  for different values of $M$, and for the three weight sequences that we introduced earlier.
To better compare the curves, these have been vertically shifted so that the location of the minima of all curves is at zero.
It is clear that even for a relatively small number of replications $M$, the criterion quickly stabilises, so that the location of the minimiser is quite insensitive to the number of repetitions.
This means that estimates of the optimal choice of $k$ (or $m$) can be obtained at a low computational cost.

\begin{figure}[!htb]
\centering
\includegraphics[trim={0cm 0cm 0cm 0cm}, width=0.99\textwidth]{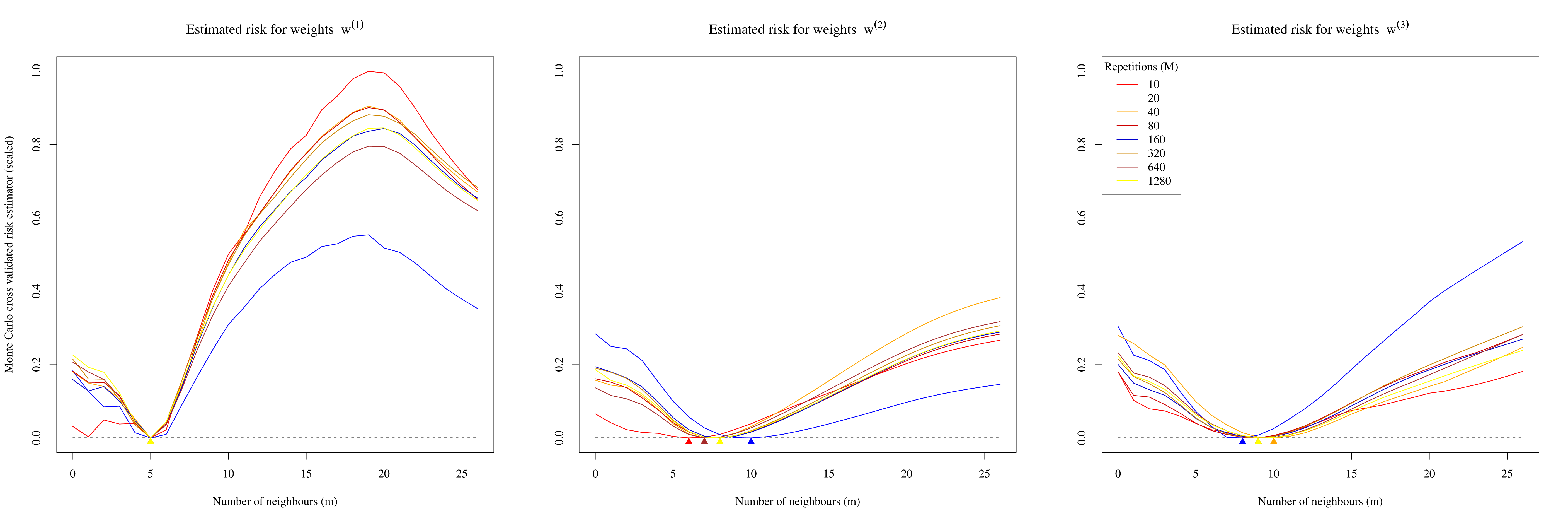}
\caption{\footnotesize
Some examples of the MCCV criterion~\eqref{eq:MCCV_criterium} for the graph from Figure~\ref{fig:SBM_example_1_estimates}.
The three plots correspond the three weight sequences.
In each plot we depict the criterion for a number of repetitions $M\in\{10 \cdot 2^i, i = 0, \dots, 7\}$.
The curves have been vertically adjusted so that the location of the minima of the curves is at zero.
The triangles below the x-axis indicate the locations of the corresponding minima.
}\label{fig:SBM_example_1_MCCV_reps_5050_n31}
\end{figure}

Figures~\ref{fig:SBM_example_1_best_k_cv} and~\ref{fig:SBM_example_1_best_ests_cv} exemplify the procedure for our running example.
Figure~\ref{fig:SBM_example_1_best_k_cv} shows the estimates of $m$ for different values of $n$, while Figure~\ref{fig:SBM_example_1_best_ests_cv} shows the resulting estimates for the connection probability $p(x)$.
We observe that the estimates of $m$ are relatively concentrated, and on average seem to stabilise and grow slowly with $n$.
As for the resulting estimates of $p(x)$, we see that in the majority of the cases there is a substantial improvement over the empirical estimator.
This can also be seen by comparing these plots with the first row of plots in Figure~\ref{fig:SBM_example_1_best_ests}.

\begin{figure}
%[!htb]
\centering
\includegraphics[trim={0cm 0cm 0cm 0cm}, width=0.80\textwidth]{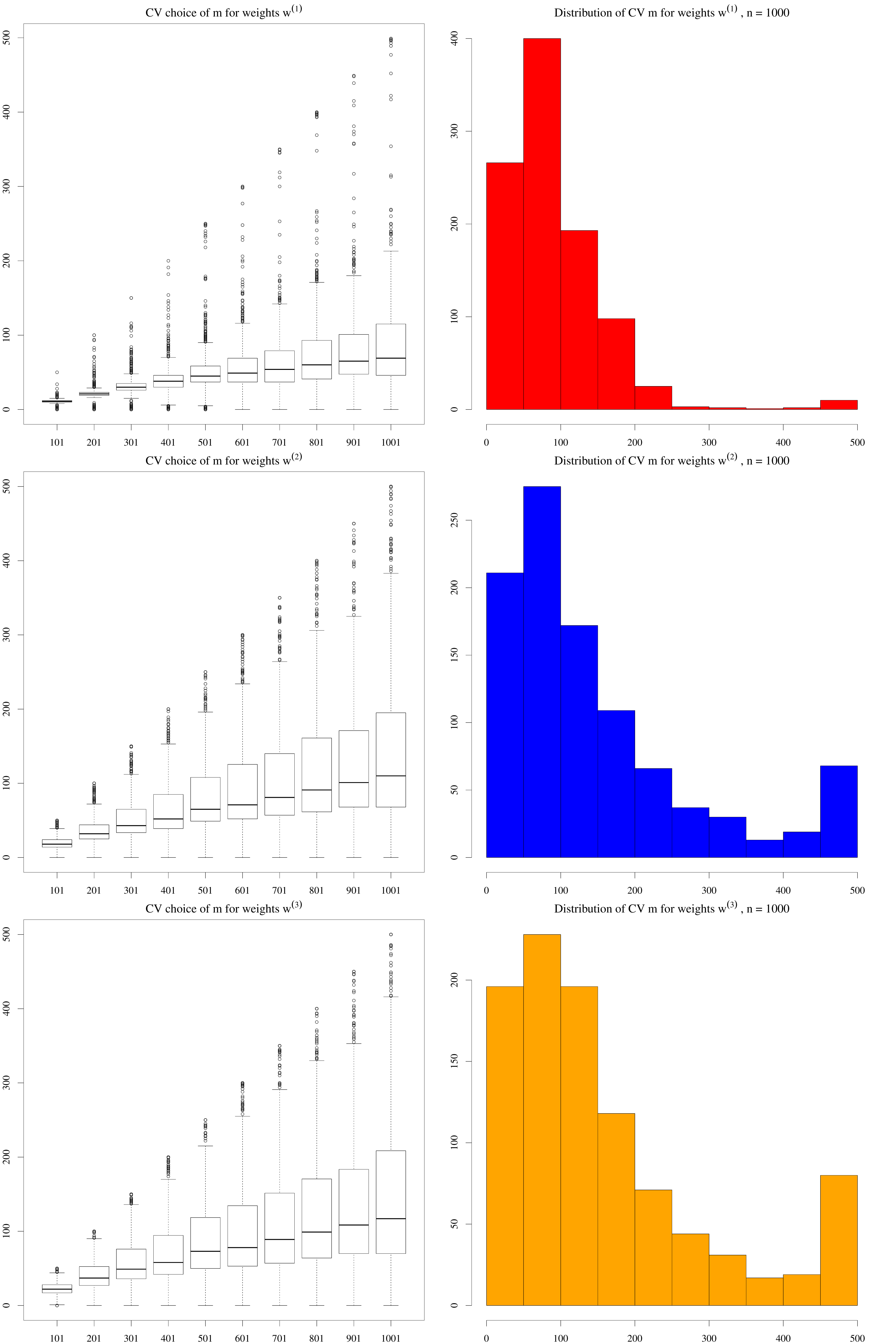}
\caption{\footnotesize
Estimates for $m$ for different sample sizes using our cross-validation procedure.
Each row of plots corresponds to a different weight sequence.
The boxplots in the left panels correspond to the estimates of $m$, for different choices of $n$.
In the right panels, each barplot shows the distribution of the estimates of $m$ for the largest value of $n$ that was considered.
}\label{fig:SBM_example_1_best_k_cv}
\end{figure}

\begin{figure}
%[!htb]
\centering
\includegraphics[trim={0cm 0cm 0cm 0cm}, width=0.80\textwidth]{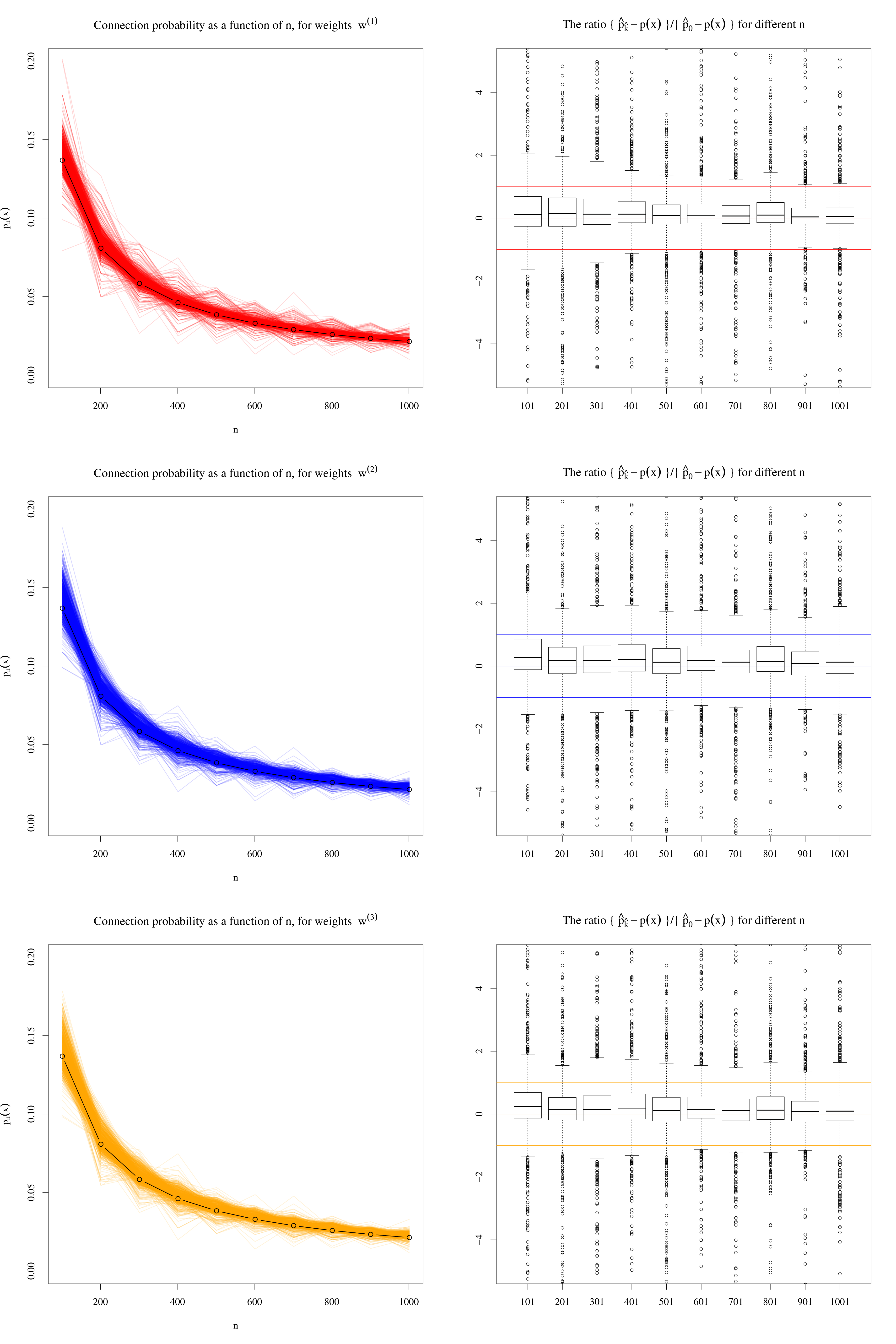}
\caption{\footnotesize
Estimates and errors corresponding to the estimates of $m$ for different sample sizes.
The left plots show the true connection probability (in black) for the different sample sizes, and the estimates for the three sets of weights in respectively grey, red, blue and orange.
On the right side we have boxplots of the errors for the different sample sizes.
These correspond to the ratio of the error estimates for each set of weights (with estimated $k$), divided by the error of the empirical estimator.
}\label{fig:SBM_example_1_best_ests_cv}
\end{figure}

% \newpage
%%%%%%%%%%%%%%%%%%%%%%%%%%%%%%%%%%%%%%%%%%%%%%%%%%
\section[Application to a Real Dataset]{Application to a Real Dataset}\label{sec:application}

In this section we apply our method to a real dataset\footnote{This dataset is part of the \emph{MaxMind WorldCities and Postal Code Databases}, and was obtained via \url{https://www.maxmind.com/en/free-world-cities-database}.}.
The dataset contains a listing of cities in the world, and their  corresponding population, latitude, and longitude.
We focused on some European cities.
We sampled $n=250$ cities (with replacement), with a probability proportional to their population.
We treated these as features $\bm X$, and chose the connection function $\rho(x)=\exp\{-(2/3)\cdot x\}$.
(The constant $2/3$ was chosen to ensure some sparsity on the resulting graph.)
Figure~\ref{fig:world_cities_map} depicts the location of the sampled cities marked with red squares, and the connections between them represented by a grey line; the city of Madrid, Spain, whose connection probability we estimate is marked blue\footnote{The figure was generated using R's \emph{rworldmap} package~\cite{south2011rworldmap}.}.
Figure~\ref{fig:world_cities_graph} represents the underlying graph, with each vertex containing the flag of the corresponding country.

\begin{figure}[!htb]
\centering
\includegraphics[trim={13cm 10cm 10cm 10cm}, width=0.80\textwidth]{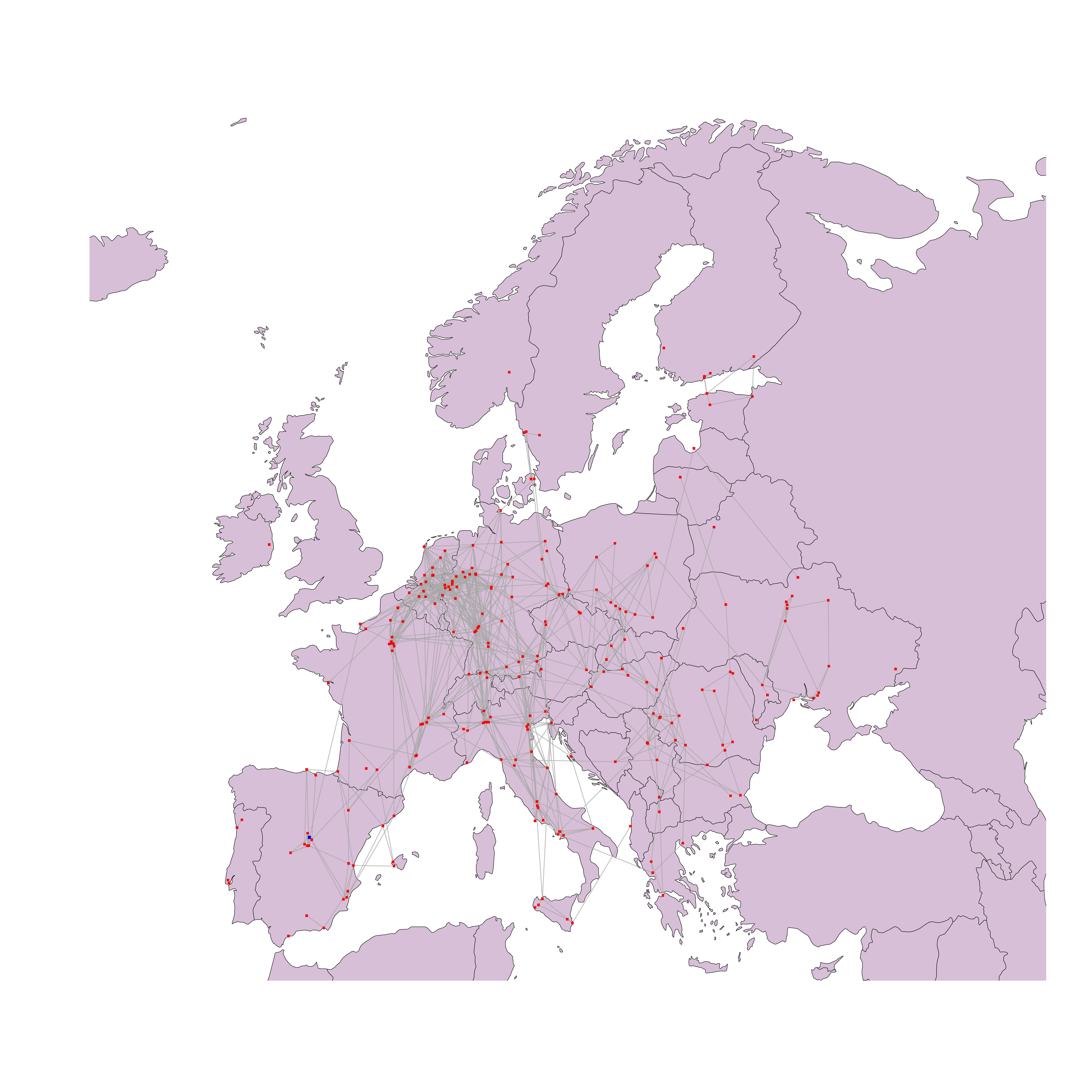}
\caption{\footnotesize
Map indicating the locations of the sampled European cities and connections between them.
Red squares mark the locations of the cities, and grey lines indicate connections between cities.
The blue square marks the location of Madrid.
}\label{fig:world_cities_map}
\end{figure}

\begin{figure}[!htb]
\centering
\includegraphics[trim={0cm 0cm 0cm 0cm}, width=0.99\textwidth]{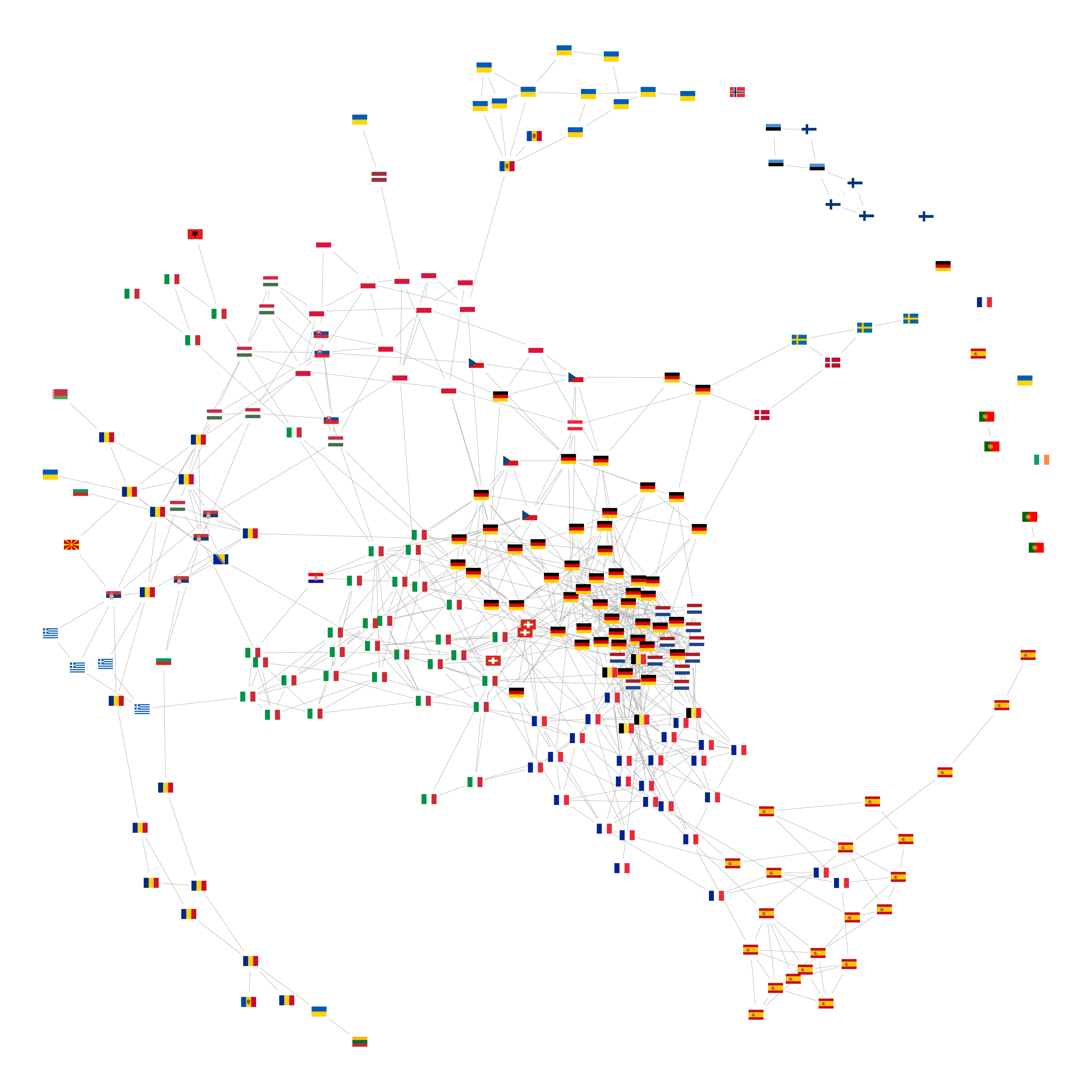}
\caption{\footnotesize
Graph of the sampled cities and corresponding connections.
Each vertex contains the flag of the country where the city lies.
}\label{fig:world_cities_graph}
\end{figure}

We applied our approach to the graph in Figure~\ref{fig:world_cities_graph} to estimate the connection probability of the city of Madrid.
The underlying connection probability if given by $p(x) = \mathbb{E}\,\rho(\|X-x\|)$.
In this case we compute this probability as
\[
p(x) = \frac1{15549}\sum_{i=1}^{15549}\rho(\|X_i-x\|),
\]
where $X_i$ represents the location of the $i$-th of the $15549$ European cities that we considered.
We computed $p(x)$, the connection probability for Madrid, as being $2.354872\times 10^{-2}$.
This connection probability tells us something about the probability that a newly chosen location in the same area would have of connecting to other vertices in the network.
We applied our estimation procedure to estimate this probability.
Figure~\ref{fig:world_cities_CV} depicts the MCCV criteria used to choose the number of neighbours $m$, based on $M=1000$ repetitions.
The estimated value for $m$ corresponded to considering respectively $85$, $126$, and $16$ neighbours.

\begin{figure}[!htb]
\centering
\includegraphics[trim={0cm 0cm 0cm 0cm}, width=0.99\textwidth]{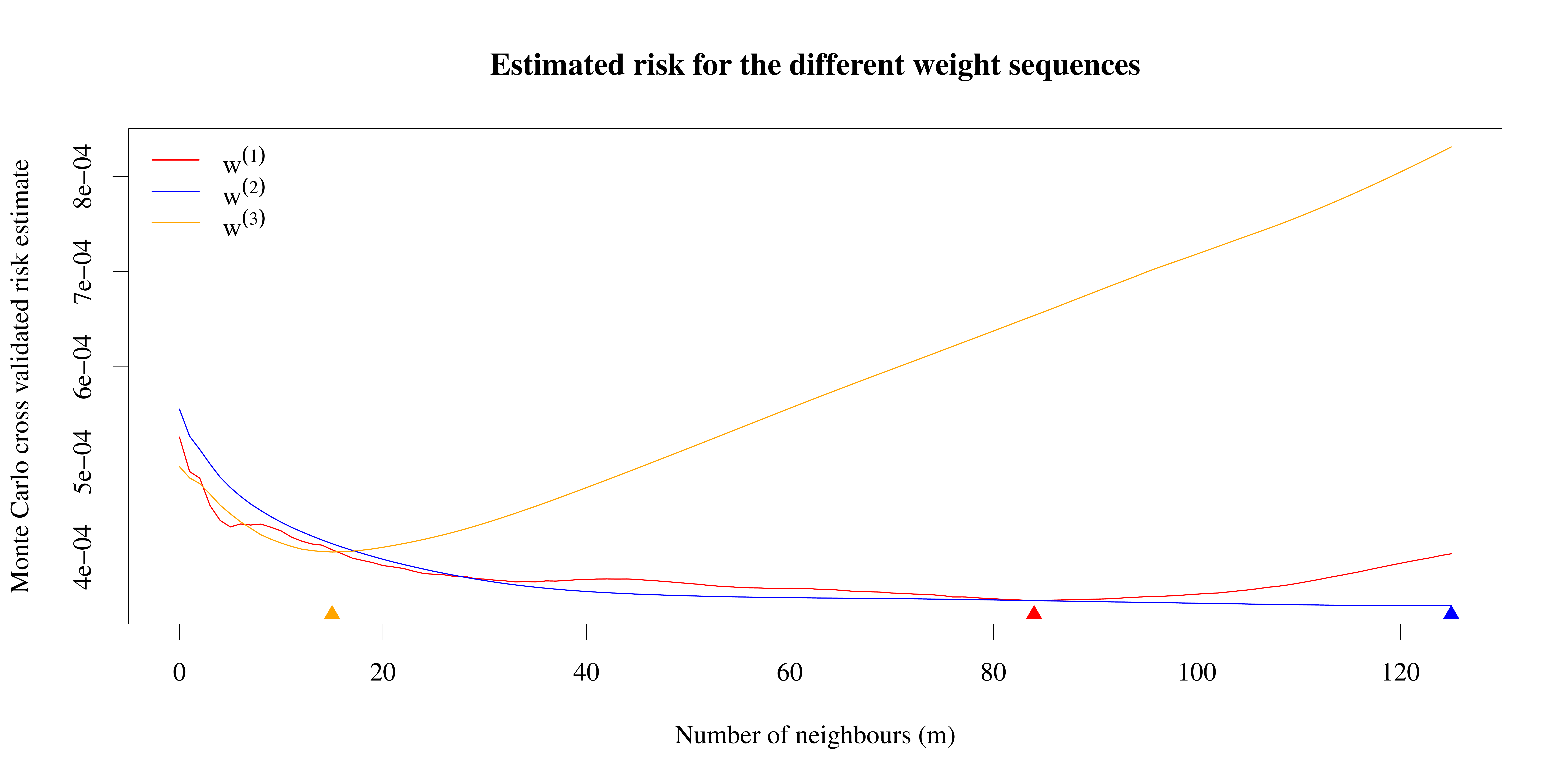}
\caption{\footnotesize
MCCV criteria for selecting the size of the neighbourhoods.
Each curse corresponds to a different set of weights.
The location of the minima are indicated by triangles.
}\label{fig:world_cities_CV}
\end{figure}

The corresponding estimates of the connection probability for the three sets of weights are respectively
$3.745882\times 10^{-2}$,
$3.391109\times 10^{-2}$, and
$2.338469\times 10^{-2}$; the empirical estimate was
$2.4\times 10^{-2}$.
The corresponding relative errors are respectively
$59.07\%$,
$44.00\%$,
$ 0.70\%$, and
$ 1.92\%$.
The conclusion is that the estimator based on the third set of weights, $w_\ell^{(3)}$, provides the best performance.
In particular, it substantially improves the empirical estimator.
The two other sets of weights underperform, which is perhaps not surprising.
The connection probability $p(x)$ should vary smoothly as the location of the feature $x$ changes making the second set of weights more appropriate.
The two other sets of weights do not seem to correctly capture the variation of $p(x)$.
Although these two sets of weights can reach a lower risk, the minimum in the MCCV criterion curve for the third set of weights is much more pronounced.
Also, as we have seen before, neighbourhoods sizes that are too large can correspond to an artificial minimiser of the risk; cf.~Figure~\ref{fig:SBM_example_1_best_k}.
In general one should expect the neighbourhood size to grow roughly logarithmically with the number of vertices.

\section[Application to Simulated Data]{Application to Simulated Data}\label{sec:simulations}
%%%%%%%%%%%%%%%%%%%%%%%%%%%%%%%%%%%%%%%%%%%%%%%%%%

\subsection[Mobile ad-hoc wireless networks]{Mobile ad-hoc wireless networks}\label{sec:simulations:ah_hoc_wireless_network}

In this section we exemplify how our results can be used in the context of the performance evaluation of mobile ad-hoc wireless networks.
Such networks are a collection of two or more mobile devices that are equipped with wireless communication and networking capabilities~\cite{haas2002wireless}.
These devices can communicate {\it directly} with other devices if these  are within a given range. In addition, they can {\it indirectly} communicate with the devices outside this range as long as there is a path along which the message can be relayed (which is possible in case each pair of subsequent devices along the path can communicate directly with each other).
Given the nature of the devices, it is reasonable to assume that the exact locations of the devices are not known.
Instead, we only assume to have access to a snapshot of the network in which it was recorded which pairs of devices communicated within a certain time window.
The problem is then to assess from such snapshots how likely it is that devices at a certain location (particularly with low coverage) can access another device in the network.

We can model a snapshot of such network using an RCM,
where individual devices are the vertices in the network.
These are sampled independently from some fixed distribution.
An edge is present between two vertices if these are within a certain distance of one another;
this means that the connection function is an indicator of the distance between vertices.

In our simulation the design point density is a mixture of three Gaussian measures on $\mathbb R^2$, restricted to the square $[0,10]^2$.
This is supposed to represent the distribution of mobile devices in a certain area of interest.
The means of the three mixture components were
\[
\mu_1 = (9,9), \qquad
\mu_2 = (8,3), \qquad\text{and}\qquad
\mu_3 = (3,9).
\]
The variance-covariance matrices were
\[
\Sigma_1 = \begin{bmatrix}4 & 1.2\\ 1.2 & 4 \end{bmatrix}, \qquad
\Sigma_2 = \begin{bmatrix}4 & 0\\ 0 & 4 \end{bmatrix}, \qquad\text{and}\qquad
\Sigma_3 = \begin{bmatrix}4 & 2\\ 2 & 4 \end{bmatrix}.
\]
The mixture weights were $w_1=0.4$, $w_2=w_3 = 0.3$.
The origin was placed at $(3,3)$, which corresponds to a relatively low value of the density, and the connection function was $\rho(x)=1_{\{x\le 2\}}$.
Figure~\ref{fig:wireless_example} displays a heatmap of the density of the design points, with a graph sampled from the random connection model being overlaid;
the red circle indicates the neighbourhood of the origin at which edges are established.

\begin{figure}[!htb]
\centering
\includegraphics[trim={4cm 7cm 3cm 2cm}, width=0.99\textwidth, clip]{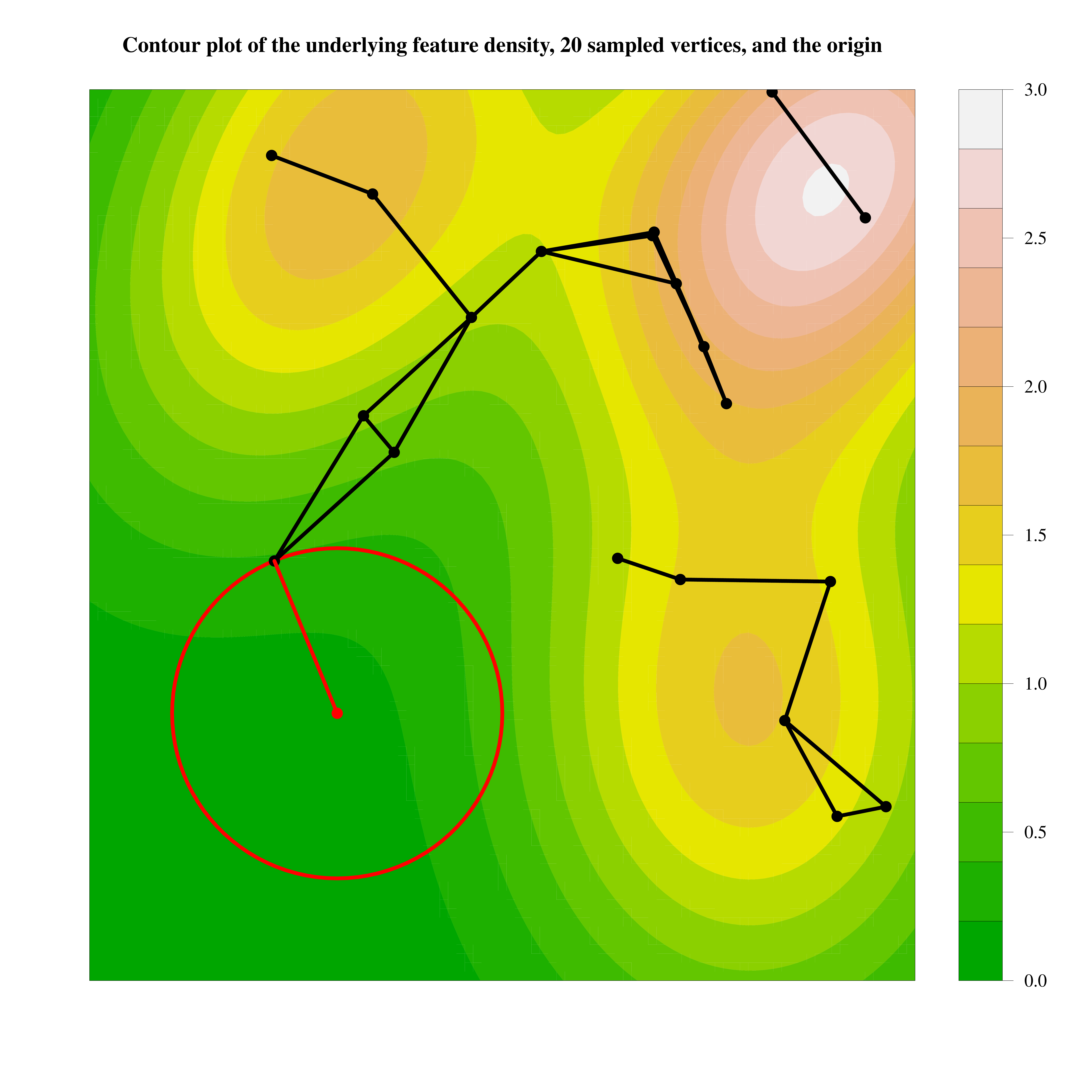}
\caption{\footnotesize
Design density and sampled network.
The heatmap represents the density of the design points.
The black dots represent the sampled vertices, while the red dot represents the origin which was placed in a low density region of the network.
The red circle represents the range of the connection function for the origin.
The lines connecting pairs of vertices are the edges of the network.
}\label{fig:wireless_example}
\end{figure}

The question that we ask is the following.
\emph{How many vertices should be present in the network so that the origin connects to at least one more vertex in the network with probability at least $0.9$?}
The answer is simple, but depends on the (unknown) connection probability  of the origin, i.e., $p=p(3,3)$.
In fact, if there are $n$ vertices in the network, the number of neighbours of the origin $B_0$ is distributed ${\rm Bin}(n,p)$.
If we want $\mathbb{P}(B_0>0)\ge0.9$, then we should pick
\[
n > n_0 := \left\lceil{\frac{\log(1-0.9)}{\log(1-p)}}\right\rceil.
\]
The quantity
\[
\bar n := \left\lceil{\frac{\log(1-0.9)}{\log(1-\hat p)}}\right\rceil.\]
is then our estimate for the minimal number of vertices that should be present in the network to ensure that the origin is connected to the rest of the network (through at least one vertex) with probability at least $0.9$.

The precision of $\bar n$ as an estimate of $n_0$ depends on how many vertices are in the network.
Since the asymptotics of our estimators are driven by $n\,p$, and $p$ is fixed in our case, then we should only expect $\bar n$ to be close to $n_0$ when $n$ is large.
In this sense, the result can be useful to trim vertices from a network while ensuring that vertices in regions of the network with less coverage still connect to the remainder of the network with high probability.

In our simulation we looked at how the estimate $\bar n$ of $n_0$ evolves as a function of the number of vertices in the network.
We considered $n\in\{ 100\cdot i,\; i = 1,\dots, 50 \}$.
For each value of $n$ in this collection we independently sampled $100$ networks from the corresponding model.
For each dataset we then computed the estimates of the connection probability at the origin for our three sets of weights, as well as the corresponding estimates of $n_0$.
In all cases we estimated the neighbourhood size using the MCCV procedure from Section~\ref{sec:choice_of_weights_and_k} using $100$ replications.
The true connection probability at the origin, numerically approximated via Monte Carlo integration was $p=0.021745$, leading to $n_0 = 105$.

\begin{figure}[!htb]
\centering
\includegraphics[trim={0cm 0cm 0cm 0cm}, width=0.99\textwidth, clip]{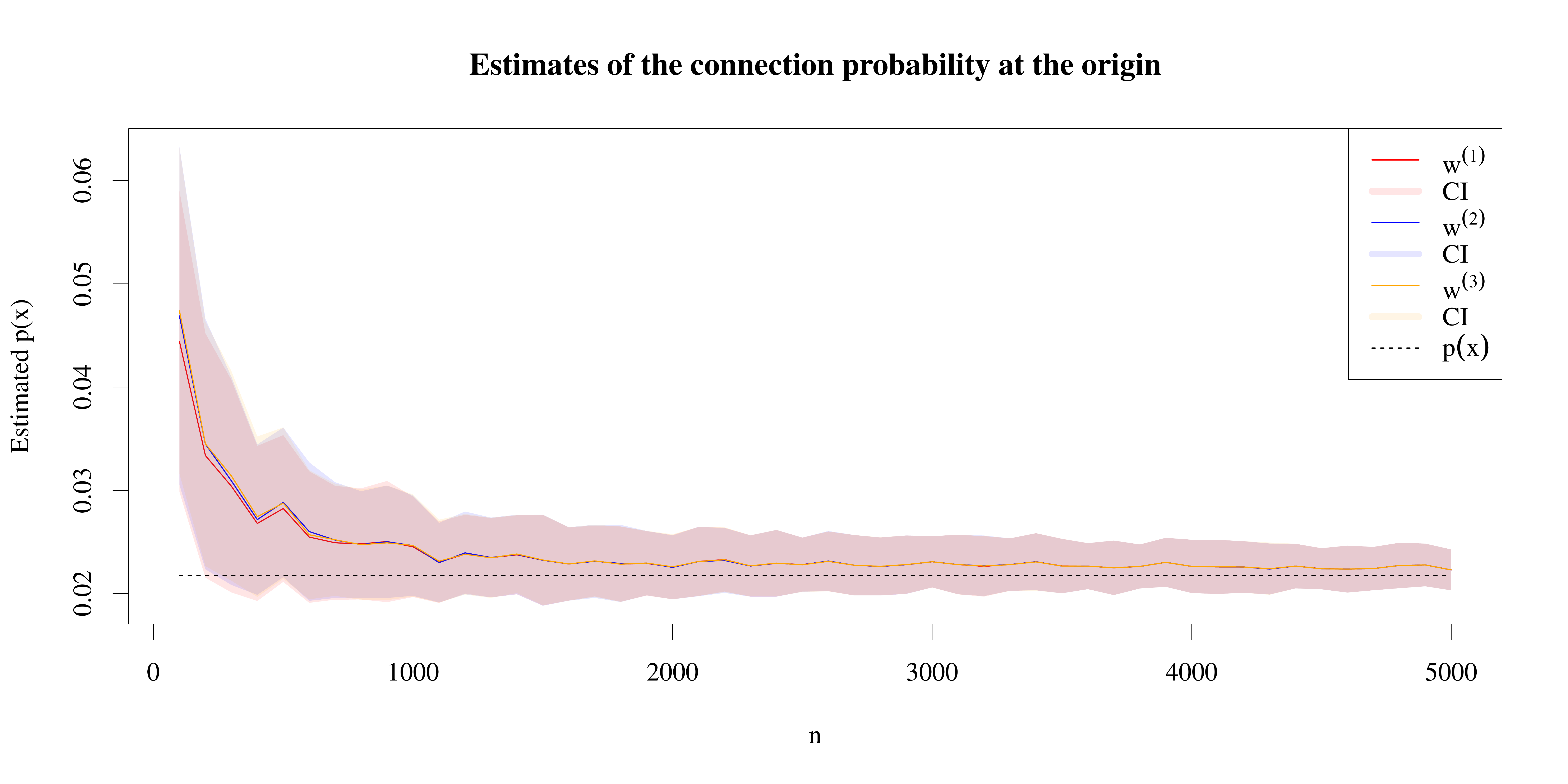}\\
\includegraphics[trim={0cm 0cm 0cm 0cm}, width=0.99\textwidth, clip]{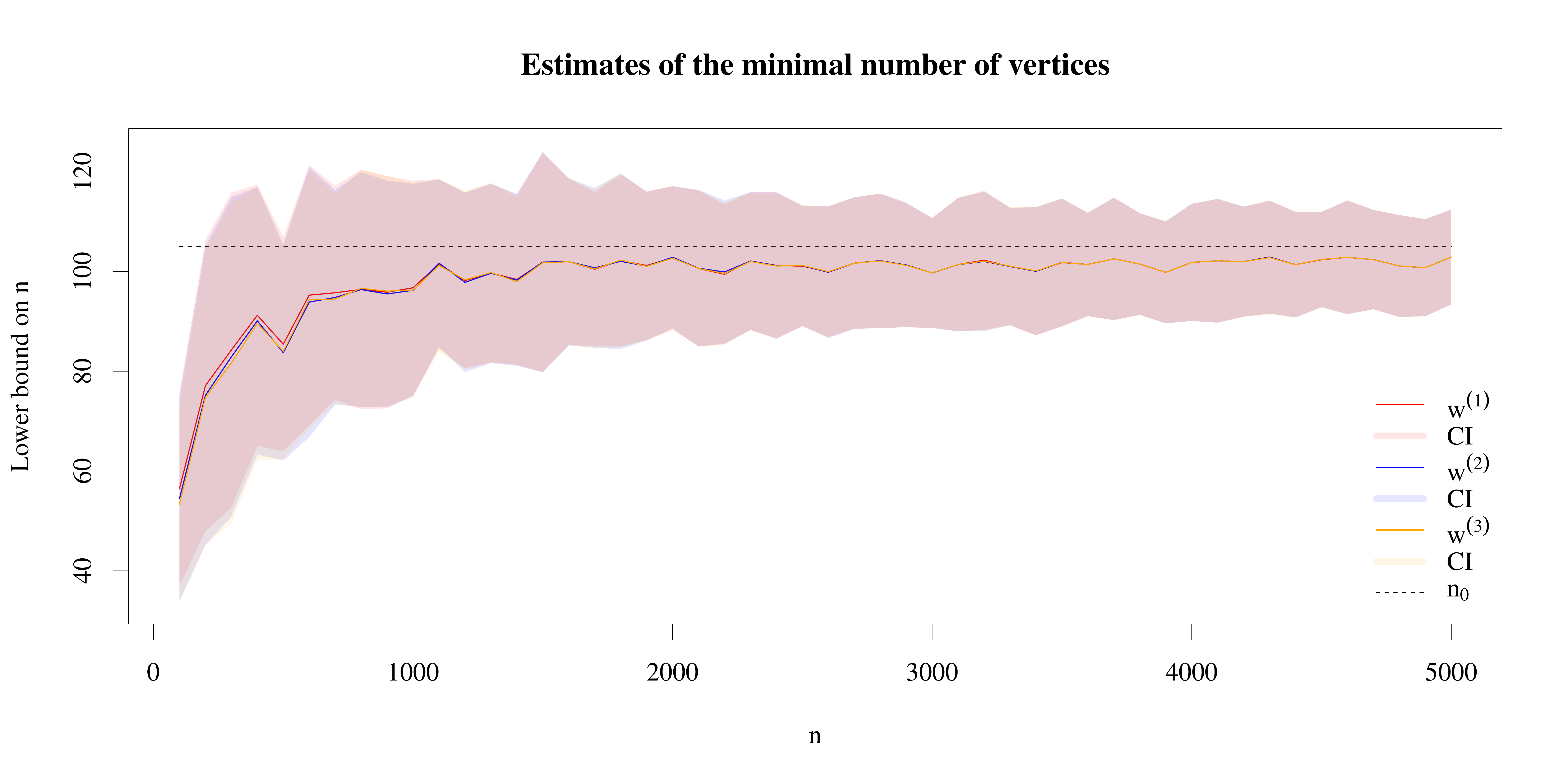}
\caption{\footnotesize
Top:
The mean of all estimates of the connection probability at the origin for our three sets of weights (full lines).
The shaded areas correspond to the mean of the estimates $\pm$ one standard deviations.
The dashed line marks the connection probability $p(x)$.
Bottom:
Mean of all estimates of $n_0$ for our three sets of weights (full lines).
the shaded areas correspond to the mean of the estimates $\pm$ $1.96$ standard deviations, and correspond to an approximate $95\%$ confidence interval.
The dashed line marks the the lower bound $n_0$.
}\label{fig:wireless_estimates}f
\end{figure}

Figure~\ref{fig:wireless_estimates} depicts the results of the simulation.
For for each combination of weight sequence and $n$, we generate $100$ estimates of $p$ and from those we obtain $100$ estimates of $n_0$.
The two plots show the evolution of the mean of the estimates of $p$ (on top) and of the estimates for $n_0$ (at the bottom) for the different values of $n$.
In both plots, and for each set of weights, the bands encompass  all values within one standard deviation of the mean of the corresponding  estimate.
The dashed horizontal lines mark the true values of the corresponding parameters.

We can see that for this particular combination of connection function and design point distribution the estimates are hardly affected by the choice of the weights.
As expected, the bias and variance of the estimator decrease as $n$ grows.
It also seems clear that the lower bound is underestimated, which follows as a consequence of the connection probability being overestimated.

To verify to what extent these estimates lead to the desired connection probability, we performed another Monte Carlo simulation.
(Since the results from Figure~\ref{fig:wireless_estimates} do not seem affected by the set of weights considered, we simply use the first set of weights in what follows.)
For each different value of $n\in\{ 100\cdot i,\; i = 1,\dots, 50 \}$ the bottom plot in Figure~\ref{fig:wireless_estimates} provides an estimate for $n_0$.
For each of these estimates $\hat n$ we produced $10^{5}$ graph with $\hat n$ vertices, and then proceed to check the fraction of those where the origin connects to any other vertex in the graph.
This fraction gives us a good approximation to the underlying connection probability at the origin for a graph with those many nodes.
Since the results in Figure~\ref{fig:wireless_estimates} are averaged over multiple runs (and therefore less variable), we also considered simply the estimates for each $n$ obtained in the first replication of the simulation.

\begin{figure}[!htb]
\centering
\includegraphics[trim={0cm 0cm 0cm 0cm}, width=0.99\textwidth, clip]{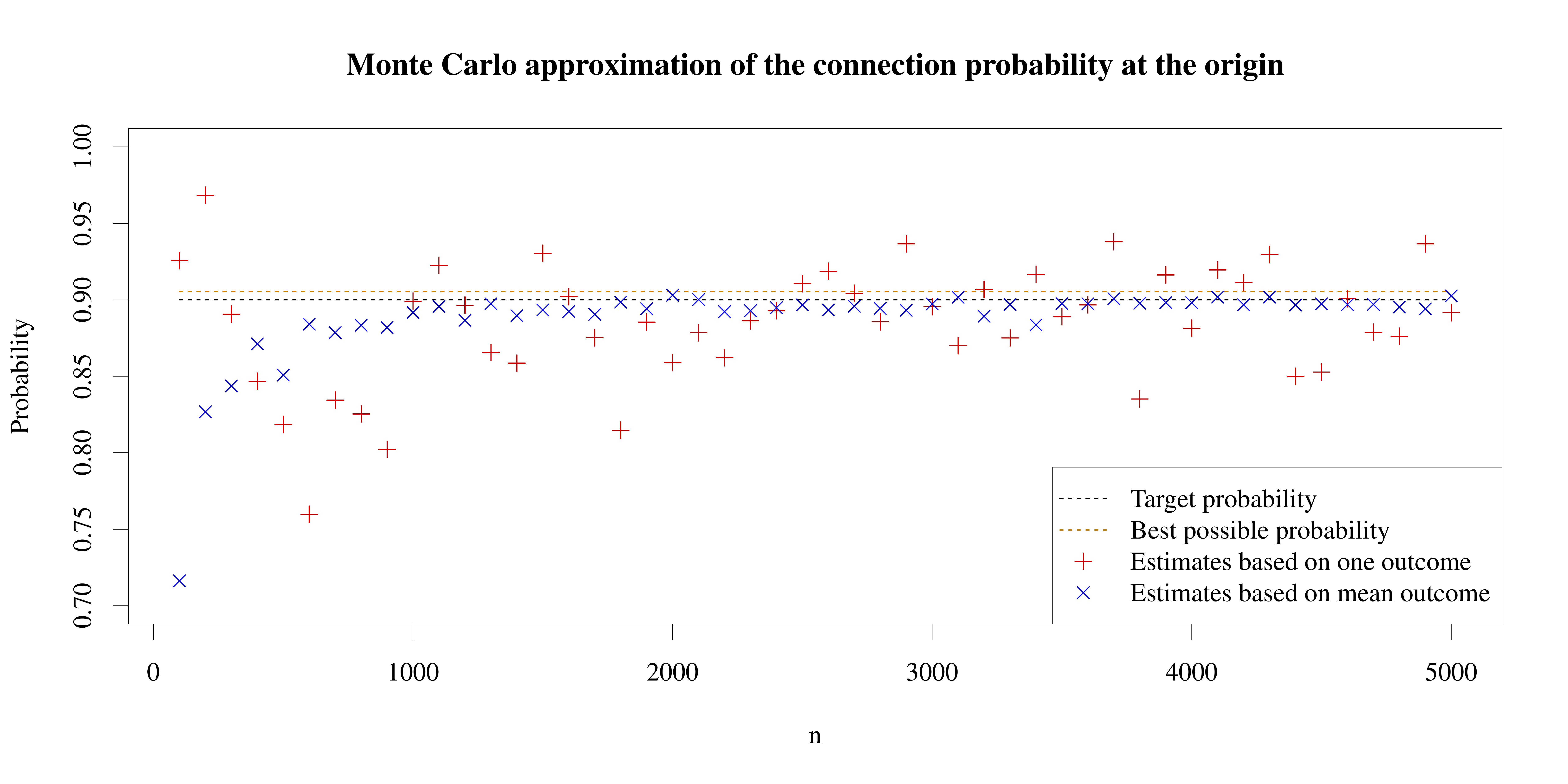}
\caption{\footnotesize
The horizontal black dashed line marks the target connection probability of $0.9$;
the horizontal orange dashed line marks the connection probability $0.9047$ in a graph with $n_0$ vertices.
The blue ${\rm x}$'s mark the probabilities based on the average estimates from Figure~\ref{fig:wireless_estimates};
the red $+$'s mark the probabilities corresponding to just one of the replication used to produce the same figure.
(All probabilities were approximated using a Monte Carlo simulation.)
}\label{fig:wireless_check}
\end{figure}

Figure~\ref{fig:wireless_check} contains the results of the simulation.
For reference, the figure also contains horizontal dashed lines marking the target probability $0.9$ and the connection probability $0.9047$ corresponding to a graph with $n_0$ vertices.
The approximations based on the mean estimated sample sizes appear to be, as one would expect, more stable than the approximations based on a single replication.
The latter should however be more representative of the results one would get by applying the method.
It is also clear that the quality of the estimates improves as the sample size grows.

As such, the method delivers an estimate of the number of vertices required to obtain a graph with the intended connection probability at the origin.
Note that the method delivers this based simply on a single graph where the origin is connected to a certain number of neighbours.
Therefore, most of the information about the connection probability at the origin is obtained from borrowing information from neighbours of the origin and exploring the underlying smoothness of the graph.

%%%%%%%%%%%%%%%%%%%%%%%%%%%%%%%%%%%%%%%%%%%%%%%%%%
\subsection[Different design distributions]{Different design distributions}\label{sec:simulations:design}

In this section we perform  more numerical simulations to illustrate how our estimator performs,
and particularly how it compares to the empirical estimator of the connection probability.

We picked three different distributions to sample features from.
For each of these we considered three different numbers of features.
Finally, for each combination of density and number of features we introduce the origin at three different locations corresponding to low, medium, and high connection probability.
In each of these settings we compute our estimates (for our three sets of weights), and pick the size of the neighbourhood using the MCCV method outlined in Section~\ref{sec:choice_of_weights_and_k}.

\begin{figure}[!htb]
\centering
\includegraphics[trim={0cm 0cm 0cm 0cm}, width=0.99\textwidth]{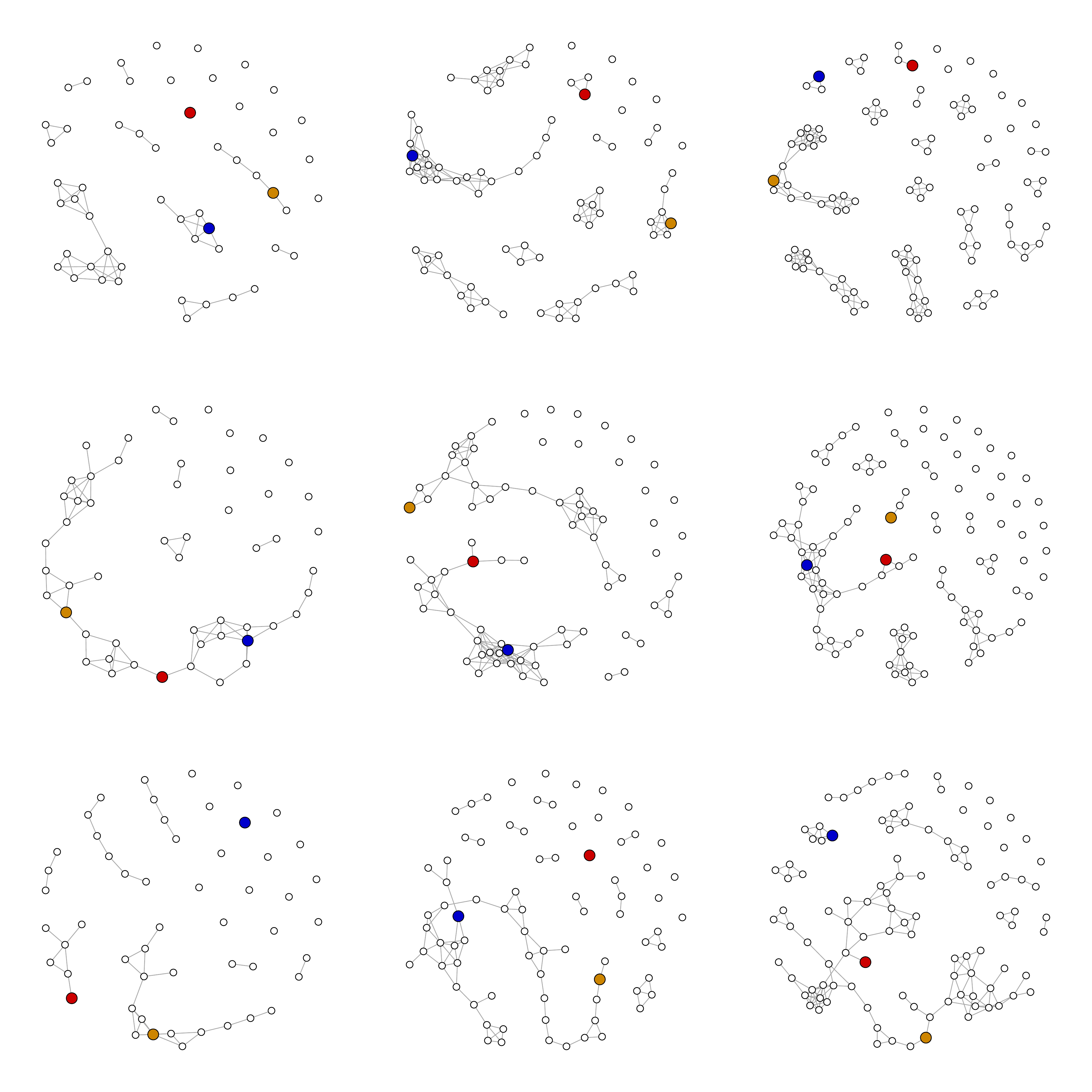}
\caption{\footnotesize
Examples of the graph used in the simulations.
Each row corresponds to a different choice of distribution for the features (top to bottom), and each column to a choice of number of features (left to right.)
Furthermore, for illustrative purposes, in each graph we introduced our three choices for the origin such that each graph contains $n+3$ vertices; in the simulations $n+1$ vertices are used (corresponding to $n$ features plus one origin.)
The origins are coloured red, blue, and orange for respectively, low, medium and high connection probability.
}\label{fig:SIM_example_graph}
\end{figure}

Figure~\ref{fig:SIM_example_graph} contains examples of the graphs that were generated.
The distributions for the features that we considered here were the following.
The first was a beta distribution with parameters $2$ and $5$, so that it has mean $2/7$.
The second was a mixture of two bivariate Gaussian distributions where
the first component is centred at $(0,0)$ and has variance-covariance matrix $\bm I_2$, and
the second component is centred at $(2.75, 2.75)$, and has variance covariance matrix with $1$'s in the diagonal and $0.75$'s elsewhere; the mixture weights were both $0.5$.
Finally, the third distribution is a uniform distribution on the unit cube $[0,1]^3$.

The number of features (excluding the origin) that were considered were $\{50, 75, 100\}$, and the connection function was $\rho(x) = \indic\{x \le \alpha \}$.
The thresholds $\alpha$ where taken to be respectively $\alpha = 0.01$, $\alpha_n = 0.6, 0.5, 0.4$, and $\alpha = 0.2$, for the three choices of feature distribution.
These values of $\alpha$ where chosen to ensure the resulting graphs were not too dense, leading to non-trivial choices for the number of neighbours.

As for the locations of the origin, these were respectively
$X_0 = 0.5, 0.1, 2/7$,
$X_0 = (1, 1), (0, 0), (1.75, 1.75)$, and
$X_0 = (0.5, 0, 0), (0.5, 0.5, 0), (0.5, 0.5, 0.5)$,
for the three choices of design distributions;
for each of these, the three points correspond to increasingly smaller values of the connection probability $p(x)$.
The values of the connection probabilities $p(x)$ and of $n\, p(x)$ can be found in Table~\ref{tab:truths_and_rates}.

\begin{table}[!htb]
\centering
\caption{\footnotesize
Value of the connection probability $p(x)$ (and of $n\, p(x)$ in parenthesis), for different combinations of feature distribution, number of design points $n$, origin location.
}
{\footnotesize
\begin{tabular}{l|c|ccc|}
\cline{2-5}
& $n$   & Low & Medium & High \\ \hline
\multicolumn{1}{|l|}{} & 50  &2.83$\cdot 10^{-7}$ (1.44$\cdot 10^{-5}$) &2.51$\cdot 10^{-2}$ (1.28) &5.93$\cdot 10^{-2}$ (3.03) \\
\multicolumn{1}{|c|}{Beta} & 75  &2.06$\cdot 10^{-7}$ (1.05$\cdot 10^{-5}$) &2.35$\cdot 10^{-2}$ (1.20) &5.56$\cdot 10^{-2}$ (2.84) \\
\multicolumn{1}{|l|}{} & 100  &1.62$\cdot 10^{-7}$ (8.26$\cdot 10^{-6}$) &2.24$\cdot 10^{-2}$ (1.14) &5.30$\cdot 10^{-2}$ (2.70) \\ \hline
\multicolumn{1}{|c|}{Gaussian} & 50  &3.15$\cdot 10^{-2}$ (1.61) &4.45$\cdot 10^{-2}$ (2.27) &5.50$\cdot 10^{-2}$ (2.80) \\
\multicolumn{1}{|c|}{mixture} & 75  &2.97$\cdot 10^{-2}$ (1.51) &4.21$\cdot 10^{-2}$ (2.15) &5.24$\cdot 10^{-2}$ (2.67) \\
\multicolumn{1}{|l|}{} & 100  &2.80$\cdot 10^{-2}$ (1.43) &4.00$\cdot 10^{-2}$ (2.04) &5.00$\cdot 10^{-2}$ (2.55) \\ \hline
\multicolumn{1}{|l|}{} & 50  &6.13$\cdot 10^{-3}$ (3.13$\cdot 10^{-1}$) &1.20$\cdot 10^{-2}$ (6.12$\cdot 10^{-1}$) &2.35$\cdot 10^{-2}$ (1.20) \\
\multicolumn{1}{|c|}{Uniform} & 75  &5.63$\cdot 10^{-3}$ (2.87$\cdot 10^{-1}$) &1.10$\cdot 10^{-2}$ (5.63$\cdot 10^{-1}$) &2.17$\cdot 10^{-2}$ (1.11) \\
\multicolumn{1}{|l|}{} & 100  &5.18$\cdot 10^{-3}$ (2.64$\cdot 10^{-1}$) &1.02$\cdot 10^{-2}$ (5.19$\cdot 10^{-1}$) &2.00$\cdot 10^{-2}$ (1.02) \\ \hline
\end{tabular}}
\label{tab:truths_and_rates}
\end{table}

For each combination of design distribution, location of the origin, number of features, and weights sampled  $1\,000$ graphs from the corresponding RCM, and computing the logarithm of the ratio of the absolute error of our estimator with a cross-validated choice for the number of neighbours ($M=100$ in the MCCV procedure) and the absolute error of the empirical estimator.
The results of the simulations are summarised in Figure~\ref{fig:SIM_AE_ratio}.

\begin{figure}[!htb]
\centering
\includegraphics[trim={0cm 0cm 0cm 0cm}, width=0.99\textwidth]{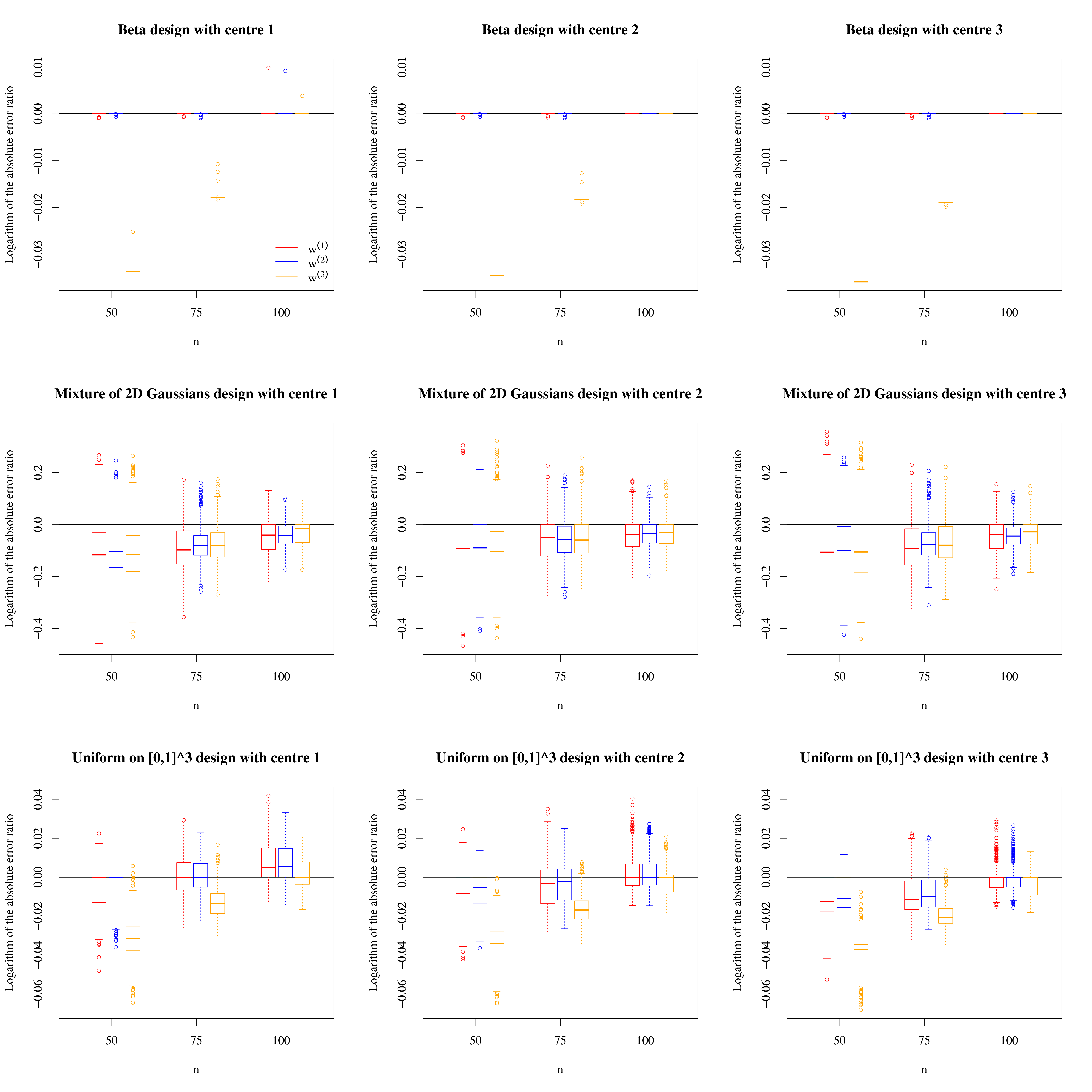}
\caption{\footnotesize
Results of the simulations.
Each row of plots corresponds to a different choice of distribution for the features (top to bottom), and each column of plots to a choice of number of features (left to right.)
In each plot, each set of three boxplots corresponds to the logarithm of the ratio of the absolute error of our estimator with cross validated choice of number of neighbours, and the absolute value of the error of the trivial estimator for the given number of features.
Finally, within each set of three boxplots, each one corresponds to a different set of weights for the estimator.
(Lower values indicate lower performance of the empirical estimator.)
}\label{fig:SIM_AE_ratio}
\end{figure}

First of all we remind that the asymptotics of the estimators are determined by $n\, p(x)$.
For the particular choices of the parameters that we made, one expects then the estimators to perform better for smaller values of $n$.
Also, for most of the sampled graphs, our estimator with the number of neighbours chosen via cross-validation perform better, irrespectively of the weights that are used.
However, it is the third set of weights that performs the best in almost all cases.
This is most likely related to fact that one expects the function $p(x)$ to vary smoothly as the location of the origin, $x$, changes.
The advantage of using the third set of weights is more evident when using the beta distributed features.

%%%%%%%%%%%%%%%%%%%%%%%%%%%%%%%%%%%%%%%%%%%%%%%%%%
\section[Conclusions]{Conclusions}\label{sec:conclusions}

Most interesting networks that are encountered in practice tend to be large, and typically display some form of inhomogeneity.
This inhomogeneity motivates why it makes sense to study the local counterparts of commonly studied metrics (such
as the degree distribution, the clustering coefficient, and the average path length).
In this paper we focused on the estimation of the local degree distribution of a given vertex of interest (termed the origin) in a network.
Equivalently, our objective is to estimate the local connection probability of said vertex;
this is the probability that the vertex under consideration establishes an edge with another vertex in the network.
The origin may be a vertex already present in the network, or it can be a \emph{probe vertex} that is introduced into the network to study its local properties.

The readily available empirical estimator for the connection probability (the degree of the vertex divided by the total number of vertices) is the maximum likelihood estimator and can be easily seen to be asymptotically Gaussian. One can, however, improve upon it by borrowing information from neighbouring vertices.
We propose a flexible, locally weighted average estimator which depends on two parameters:
the set of weights, and the set of neighbours to borrow information from.

To analyse the performance of the estimator we model the underlying network as arising from a random connection model.
This means that that we assume that there is a latent feature (potentially living in a high-dimensional metric space) associated with each vertex.
We then assume that the probability that two vertices connect is determined by a (perhaps unknown) {connection function} applied to the distance between the corresponding features.
If the connection function has bounded support (or vanishing tails), then with full (resp., high) probability, vertices only connect to their `neighbours' in the feature space.
Since features are assumed to be sampled from some arbitrary distribution, the resulting network will indeed be inhomogeneous.
The model that is thus constructed  is extremely flexible and captures several particular random graph models as particular cases.
It therefore offers a realistic modelling assumption for inhomogeneous networks.

Under the above modelling assumption we derive an oracle inequality that bounds the error of the estimator in terms of its parameters.
Under some prior knowledge on how the connection probability varies throughout the graph, the oracle inequality can be used to motivate the choice of weights (featuring in  the estimator).

To select which neighbours we borrow information from, we designed an MCCV procedure.
The set of vertices (other than the origin) is evenly partitioned into two disjoint sets.
Each set, together with the origin, is used to induce a network.
From one network we derive our estimator (as a function of the number of neighbours), while from the other we derive the empirical estimator.
The squared difference between the two is averaged over several random choices of the sets of vertices to deliver a function of the number of neighbours that we use as a data-driven criterion to pick the number of neighbours.
The procedure helps trading off the reduction in variance resulting from the averaging procedure, against the increase in bias resulting from borrowing information from neighbours.
Our numerical experiments show that the minimiser of the criterion stabilises rather quickly as the number of replications of the random sets grows, making this a  stable criterion for selecting the relevant neighbours of the origin.

Our numerical experiments further show that our estimator typically delivers more precise estimates of the local connection probability than the empirical estimator.
The precision of the estimator is not determined, as one might expect, by the number of vertices $n$ of the graph, but instead by $n\, p(x)$, i.e., by the expected degree of the origin.
Note that $p(x)$ itself may depend on $n$, so that the asymptotics of the estimator are perhaps better understood in terms of how $n\, p(x)$ grows with $n$.
If $n\,p(x)$ is too large, then the eccentricity of the origin is likely to be small in which case the network loses its inhomogeneity.
If the number of neighbours is selected in a data-driven way, then the estimator still performs as intended, but this is not an interesting regime for us since the local connection probability just becomes the (global) connection probability.
If, on the other hand, $n\, p(x)$ grows slowly with $n$, then the graph should display some local features and the problem is meaningful.
In this case we see that the selected number of neighbours grows slowly with $n$, again supporting the idea that indeed the network has local features.

%%%%%%%%%%%%%%%%%%%%%%%%%%%%%%%%%%%%%%%%%%%%%%%%%%
\section[Future Work]{Future Work}\label{sec:future}

The MCCV procedure delivers reasonable estimates for the size of the neighbourhoods that lead to estimates that offer an improvement over the empirical estimator.
However, the performance of our estimator is (of course) inferior to that of the oracle choice for the size of the neighbourhood.
It would be interesting to look into alternative methods to select the size of the neighbourhood, particularly methods that are less generic  (i.e., more tailored towards our specific setting).

It would also be of interest to provide more detailed results about the behaviour of the oracle of the size of the neighbourhood in terms of the parameters of the network.
(Interestingly, in our setting, the oracle is actually random.)
We suspect however, that this should only be possible by making much more specific assumptions on the structure of the network.

Another interesting related problem is that of community detection.
If the underlying graph has a community structure, and if these communities are different enough, then the way that the connection probability varies over vertices should provide information about when one transitions between communities.
Although we did investigate this possibility, in it's current form, the method is not computationally competitive with other community detection algorithms, but a modification of the approach tailored specifically for this purpose may improve upon this.

It would also be interesting to extend this approach to other features of networks, e.g., the clustering coefficient, or the distribution of the length of the shortest path.
The idea would be to estimate these from a subgraph induced by a neighbourhood of the origin.
The neighbourhood would be chosen as large as possible so that these features are still essentially constant over the neighbourhood.

\appendix
%%%%%%%%%%%%%%%%%%%%%%%%%%%%%%%%%%%%%%%%%%%%%%%%%%
\section{Proofs}\label{apx:proofs}

\subsection[Proof of the CLT for the trivial estimator]{Proof of the central limit theorem for the trivial estimator}\label{apx:proofs:CLT}

This CLT, which is a variation of the standard CLT, can be established in various ways; for completeness we include
an (elementary) proof.

Denote the sequence of random variables on the left hand side of~\eqref{eq:asymptotics_p_0} by
\[
Z_n := \frac{n \hat p_0 - p_n(x)}{\sqrt{n p_n(x) \{1-p_n(x))\}}},
\]
where $n \hat p_0 \sim {\rm Bin}\{n, p_n(x)\}$.
The moment generating function of $Z_n$ is
\[
\psi_n(t) =
\left[\{1-p_n(x)\}e^{-a_n(t)} + p_n(x)e^{b_n(t)}\right]^n,
\]
where
\[
a_n(t) = \frac{p_n(x) t}{\sqrt{n p_n(x) \{1-p_n(x))\}}}, \qquad\text{and}\qquad
b_n(t) = \frac{\{1-p_n(x)\} t}{\sqrt{n p_n(x) \{1-p_n(x))\}}}.
\]
Writing
\begin{align*}
e^{-a_n(t)} &= 1 - a_n(t) + \frac12 a_n(t)^2 + O\{a_n(t)^3\},\\
e^{b_n(t)}  &= 1 + b_n(t) + \frac12 b_n(t)^2 + O\{b_n(t)^3\},
\end{align*}
we obtain that, as $n\to\infty$,
\[
\psi_n(t) = \left\{1 + \frac1n c_n(t)\right\}^n \to e^{c(t)},
\]
as long as $c_n(t) \to c(t)$, where in our case,
\[
c_n(t) =
\frac n2\left[\{1-p_n(x)\}a_n(t)^2 + p_n(x)b_n(t)^2\right] +
O\{n a_n(t)^3\} + O\{n b_n(t)^3\}.
\]
(Note that the terms involving $a_n(t)$ and $b_n(t)$ cancel.)
By substituting $a_n(t)$ and $b_n(t)$, we conclude that as long as $n p_n(x)\to \infty$, then,  as $n\to \infty$,
\[
c_n(t) = \frac12t^2 + o(1) \to \frac12 t^2 = c(t).
\]
This shows  that $\psi_n(t) \to \exp(t^2/2)$, as $n\to\infty$.
By L\'evy's continuity theorem we then have that $Z_n$ is converges to a standard Gaussian random variable
as $n\to \infty$, under the proviso that
$n p_n(x)\to \infty$ as $n\to \infty$.

\subsection[Derivation of the oracle inequality~\eqref{eq:oracle_bound}]{Derivation of the oracle inequality~\eqref{eq:oracle_bound}}\label{apx:proofs:oracle_bound}

We conclude here the proof of Theorem~\ref{theo:oracle_bound} starting from the recursion in~\eqref{eq:recursion_one_step_error}.
From the features $\bm X$ from~\eqref{def:P_i_j}, and the random variables $(\epsilon_{i, j})_{i,j=0,\dots,n}$ from~\eqref{def:A_i_j}, we define the filtrations (indexed by $k$ and depending on $n$)
\[ %\begin{equation}\label{def:filtration}
\mathcal{F}_k^{(n)} :=
\sigma\big\{ \bm X, (\epsilon_{i, j}: i\in V_{k}, j=0,\dots,n) \big\}.
\] %\end{equation}
Note that $V_0,\dots,V_{k+1}\in\mathcal{F}_{k}^{(n)}$, so that $\hat p_{k}$ is measurable with respect to $\mathcal{F}_{k}^{(n)}$.
Using the fact that if $i\not\in V_k$, then \[\mathbb{E}\big[A_{i, j}\mid\mathcal{F}_{k}^{(n)}\big] = P_{i, j},\] it directly follows that
\[ %\begin{equation}\label{def:conditional_expectation}
g_{k} = \mathbb{E}\big[G_{k}\mid\mathcal{F}_{k}^{(n)}\big] =
\frac1{n|V_{k+1}\backslash V_{k}|}\sum_{i\in V_{k+1}\backslash V_{k}} \sum_{\substack{j=0\\ j\neq i}}^nP_{i, j}-\hat p_k.
\] %\end{equation}

Writing $D_{k} = G_{k} - g_{k}$,  it is clear that the $(D_k)_{k=0,\dots,n}$ are martingale increments with respect to the filtration $\{\mathcal{F}_{k}^{(n)}\}_{k=0}^n$.
We introduce the additional notation
\begin{align*}
%\label{def:conditional_bias_term}
\Delta_k &:=
\frac{\gamma_k}{|V_{k+1}\backslash V_{k}|}\sum_{i\in V_{k+1}\backslash V_{k}}
\bigg\{\frac1n\sum_{\substack{j=0\\ j\neq i}}^nP_{i, j} - p(X_i)\bigg\}, \\
%\label{def:bias_term}
\nabla_k &:=
\frac{\gamma_k}{|V_{k+1}\backslash V_{k}|}\sum_{i\in V_{k+1}\backslash V_{k}} \big\{p(X_i) - p(x)\big\}.
\end{align*}
It follows that we can rewrite the recursion~\eqref{eq:recursion_one_step_error} as
\begin{equation}\label{eq:recursion_one_step_error_decomposed}
\delta_{k+1} = (1-\gamma_k)\cdot\delta_k + \gamma_k\cdot D_k + \Delta_k + \nabla_k.
\end{equation}
Iterate this difference equation, we obtain for $k_0\in\{0,\dots,k-1\}$ (eventually depending on $n$),
\[
\delta_{k+1} =
\delta_{k_0}\cdot \prod_{i = k_0}^k(1-\gamma_i) +
\sum_{i=k_0}^k (\gamma_i \cdot D_i + \Delta_i + \nabla_i)\cdot\prod_{j=i+1}^k(1-\gamma_j).
\]
Now denote \[A_i = \sum_{j=k_0}^i \gamma_j\cdot D_j,\:\:\:B_i = \sum_{j=k_0}^i\Delta_j + \sum_{j=k_0}^i\nabla_j,\:\:\:H_i = A_i + B_i.\]
Using  `summation-by-parts', the summation in (\ref{eq:recursion_one_step_error_decomposed})  is
\[
\sum_{i=k_0}^k (H_i-H_{i-1})\cdot\prod_{j=i+1}^k(1-\gamma_j) =
H_k - \sum_{i = k_0}^{k-1} \gamma_{i+1} \cdot H_i \cdot  \prod_{j=i+2}^k(1-\gamma_j).
\]
Using  `summation-by-parts' one can also show that
\[
\sum_{i = k_0}^{k-1} \gamma_{i+1} \cdot  \prod_{j=i+2}^k(1-\gamma_j) =
1-\prod_{j=k_0+1}^k(1-\gamma_j) \le 1,
\]
which together with the triangle inequality and the inequality $1-x\le\exp(-x)$, $x\in\mathbb{R}$, can be used to obtain the following bounds:
\begin{align*}
|\delta_{k+1}| &\le
|\delta_{k_0}| \cdot \prod_{i = k_0}^k(1-\gamma_i) + |H_k| +
\sum_{i = k_0}^{k-1} \gamma_{i+1} \cdot |H_i| \cdot  \prod_{j=i+2}^k(1-\gamma_j)\\ &\le
|\delta_{k_0}| \cdot \exp\Big(-\sum_{i=k_0}^k\gamma_i\Big) + \max_{i=k_0,\dots,k}|H_i|\bigg\{ 1 +
\sum_{i = k_0}^{k-1} \gamma_{i+1} \cdot  \prod_{j=i+2}^k(1-\gamma_j)\bigg\}\\ &\le
|\delta_{k_0}| \cdot \exp\Big(-\sum_{i=k_0}^k\gamma_i\Big) + 2\max_{i=k_0,\dots,k}\Big|H_j\Big|.
\end{align*}
Squaring both sides, using the inequality $(x+y)^2 \le 2(x^2 + y^2)$, and taking expectations we arrive at the bound
\[
\mathbb{E}|\delta_{k+1}|^2 \le
2 \cdot \mathbb{E}|\delta_{k_0}|^2 \cdot \exp\Big(-2 \cdot \sum_{i=k_0}^k\gamma_i\Big) +
8 \cdot \mathbb{E}\max_{i=k_0,\dots,k}\Big|H_j\Big|^2.
\]
Since $(x+y+z)^2 \le 3(x^2 + y^2 + z^2)$, up to a constant multiple the expectation in the last term is no more than
\[
\mathbb{E}\max_{i=k_0,\dots,k}\Big|\sum_{j=k_0}^i \gamma_j\cdot D_j\Big|^2 +
\mathbb{E}\max_{i=k_0,\dots,k}\Big|\sum_{j=k_0}^i \Delta_j\Big|^2 +
\mathbb{E}\max_{i=k_0,\dots,k}\Big|\sum_{j=k_0}^i \nabla_j\Big|^2.
\]
It remains to bound these three terms.

We start with the first term.
Since the $D_j$ are martingale increments with respect to the filtration $\{\mathcal{F}_{k}^{(n)}\}_{k=0}^n$, the Burkholder maximal inequality~\cite{shiryaev1996probability} gives
\[
\mathbb{E}\max_{i=k_0,\dots,k}\Big|\sum_{j=k_0}^i \gamma_j\cdot D_j\Big|^2 \le
B_2 \cdot \sum_{j=k_0}^k \gamma_j^2\cdot \mathbb{E}D_j^2 \le
B_2 \cdot \sum_{j=k_0}^k \gamma_j^2,
\]
for a universal constant $B_2$.

As for the second term, for $m\in\{k_0,\dots,k\}$, it is
\[
%\sum_{l=k_0}^m \Delta_l =
\sum_{l=k_0}^m\frac{\gamma_l}{|V_{l+1}\backslash V_{l}|}\sum_{i\in V_{l+1}\backslash V_{l}}
\bigg\{\frac1n\sum_{\substack{j=0\\ j\neq i}}^nP_{i, j} - p(X_i)\bigg\} =
\sum_{l=k_0}^m\frac{\gamma_l}{n \cdot |V_{l+1}\backslash V_{l}|}\sum_{i\in V_{l+1}\backslash V_{l}}S_n^{(i)}.
\]
For any $i=0,\dots,m$, $S_n^{(i)}$ is the partial sum of the first $n$ terms of the sequence
\[
P_{i,0} - p(X_i),\; \dots,\; P_{i,i-1} - p(X_i),\; P_{i,i+1} - p(X_i),\; \dots.
\]
The terms in this sequence have mean 0 and are uncorrelated since for $j_1\neq j_2$,
\begin{align*}&
\mathbb{E}\{P_{i,j_1} - p(X_i)\}\{P_{i,j_2} - p(X_i)\} =
\mathbb{E}\mathbb{E}[\{P_{i,j_1} - p(X_i)\}\{P_{i,j_2} - p(X_i)\}\mid X_i]\\ &\qquad=
\mathbb{E}\left\{\mathbb{E}[P_{i,j_1} - p(X_i)\mid X_i]\mathbb{E}[P_{i,j_2} - p(X_i)\mid X_i]\right\}
=0,
\end{align*}
where we use that by definition $\mathbb{E}[P_{i, j}\,|\, X_i] = p(X_i)$.
Denote
\[
\sigma_n^2 =
\mathbb{V}{\rm ar}\,S_n^{(i)} =
\sum_{\substack{j=0\\ j\neq i}}^n\mathbb{V}{\rm ar}\left\{P_{i, j} - p(X_i)\right\} =
n\left\{\mathbb{E}[P_{i, j}^2] - \mathbb{E}[p(X_i)^2]\right\} =
n\cdot \sigma^2.
\]
(Note that $\sigma^2\le\mathbb{V}{\rm ar}\,P_{i, j}$.)
Since the $S_n^{(i)}$ are partial sums of a strictly stationary sequence of random variables and have variance $\sigma_n^2$, we know from~\cite{denker1986uniform} that
\[
\frac{S_n^{(i)}}{\sigma_n} \stackrel{\rm d}{\longrightarrow} N(0,1) \qquad\Leftrightarrow\qquad
\frac{(S_n^{(i)})^2}{\sigma_n^2} \quad \hbox{is uniformly integrable}.
\]

A sufficient condition for $(S_n^{(i)})^{2}/\sigma_n^2$ to be uniformly integrable is for it to have uniformly bounded moment of order larger than 1.
By the multinomial theorem,
\[
\mathbb{E}(S_n^{(i)})^{3} =
\sum_{|\bm\alpha|=3} {3 \choose \bm \alpha} \cdot
\mathbb{E}\big\{ P_{i,\cdot} - p(X_i) \big\}^{\bm \alpha}
\]
where $\bm\alpha$ is a multi-index of length $n$, and $P_{i,\cdot} - p(X_i)$ represents the vector $\{P_{i,0} - p(X_i), \dots, P_{i,i-1} - p(X_i), P_{i,i+1} - p(X_i), \dots P_{i,n} - p(X_i)\}$.
Since these terms are conditionally independent given $X_i$, and identically distributed,  all expectations in the previous display are equal to zero with the exception of $s := \mathbb{E}\{P_{i, j} - p(X_i)\}^3$.
From this we conclude that uniform integrability follows if $s$ is such that
\[
\mathbb{E}(S_n^{(i)})^{3}  = n\cdot s \le O\big\{n \cdot \sigma^2\big\}^{3/2}.
\]
This is ensured by assumption~\eqref{eq:moment_condition}.

Conclude that if $Z_n^{(i)} = S_n^{(i)}/n$, then $\sqrt n Z_n^{(i)}/\sigma \stackrel{\rm d}{\longrightarrow}N(0,1)$, so that if $Z$ is distributed as a $N(0,1)$ random variable, then, as $n\to\infty$,
\[
\mathbb{P}\left( \left|\sqrt n\frac{Z_n^{(i)}}{\sigma}\right| > M \right) \rightarrow \mathbb{P} (|Z|>M) \le 2e^{-\frac12M^2},\quad i=1,2,\dots,
\]
whence, $\mathbb{P} (|Z_n^{(i)}|>M) \le 3 \exp\big\{-n\cdot M^2/(2\sigma^2)\big\}$, $i=1,2,\dots$, for all large enough $n$.
Using the triangle inequality and monotonicity
\[
\mathbb{E}\max_{l=0,\dots,k} \left| \sum_{m=0}^l\frac{\gamma_m}{|V_{m+1}\backslash V_{m}|}\sum_{i\in V_{m+1}\backslash V_{m}} Z_n^{(i)}\right|^2	\le
\left(\sum_{m=0}^k\gamma_m\right)^2 \mathbb{E}\max_{i\in V_{k+1}}|Z_n^{(i)}|^2.
\]
Conditioning on $\{\max_{i\in V_{k+1}}|Z_n^{(i)}|>M\}$, we find
\begin{align*}
\mathbb{E}\max_{i\in V_{k+1}}|Z_n^{(i)}|^2 &\le
\mathbb{P}\left\{\max_{i\in V_{k+1}}|Z_n^{(i)}|>M\right\} + M^2\le
\sum_{i=0}^n\mathbb{P}\left\{|Z_n^{(i)}|>M\right\} + M^2\\ &\le
3\exp\left\{\log(n+1)-\frac n{2\sigma^2}M^2\right\} + M^2,
\end{align*}
for any $M\ge0$.
Setting $M^2 =4\cdot\sigma^2 \cdot \log(n+1)/n$, it follows that
\[
\mathbb{E}\max_{i\in V_{k+1}}|Z_n^{(i)}|^2 \le \frac{3+4\cdot\sigma^2}n\cdot\log(n+1).
\]

The third term is bounded using the triangle inequality and monotonicity:
\[
\mathbb{E}\max_{i=k_0,\dots,k}\Big|\sum_{j=k_0}^i \nabla_j\Big|^2 \le
\Big(\sum_{i=k_0}^k \gamma_i\Big)^2 \cdot \mathbb{E}\max_{i\in V_{k+1}}|p(X_i)-p(x)|^2.
\]
The statement of the theorem follows by combining the three upper bounds.

% \newpage
\bibliographystyle{plain}
\bibliography{references}

\end{document}